\documentclass[sigconf]{acmart}

\pdfoutput=1

\usepackage[ruled,vlined,linesnumbered]{algorithm2e}

\usepackage{tikz} 
\usetikzlibrary{shapes, arrows, matrix, positioning, patterns, decorations.pathreplacing}

\usepackage{pgfplots}
\pgfplotsset{
    compat=newest,
    table/col sep=comma,
    table/x=x,
    table/y=y
}

\usepackage{makecell}
\usepackage{subcaption}

\usepackage{paralist}

\usepackage{ifthen}

\theoremstyle{plain}
\newtheorem{theorem}{Theorem}[section] 
\newtheorem{lemma}[theorem]{Lemma} 
\newtheorem{proposition}[theorem]{Proposition} 
\newtheorem{observation}[theorem]{Observation}

\theoremstyle{definition}
\newtheorem{definition}[theorem]{Definition}

\newcommand{\Sum}[1]{\ensuremath{\mathsf{Sum}(#1)}}

\newcommand{\Ran}[1]{\ensuremath{\mathit{range}(#1)}}

\newcommand{\Dfn}[1]{\textbf{\emph{#1}}}

\newcommand{\PSG}{\ensuremath{\mathit{PSG}}\xspace}

\newcommand{\PSRM}{\ensuremath{\mathit{PSRM}}\xspace}

\newcommand{\Entities}{\ensuremath{\mathit{E}}\xspace}
\newcommand{\DTP}{\ensuremath{\mathit{DTP}}\xspace}
\newcommand{\TBL}{\ensuremath{\mathit{tbl}}\xspace}
\newcommand{\PSM}{\ensuremath{\mathit{PSM}}\xspace}

\newcommand{\eqsub}{\ensuremath{\equiv_{\mathit{row}}}\xspace}
\newcommand{\eqobj}{\ensuremath{\equiv_{\mathit{col}}}\xspace}


\newcommand{\TARGETDOC}{0}

\begin{document}
\title{Mining Domain-Based Policies}


\ifthenelse{\TARGETDOC = 0}{
\author{Si Zhang}
\affiliation{
    \institution{Department of Computer Science\\University of Calgary}
    \city{Calgary}
    \country{Canada}
}
\email{si.zhang2@ucalgary.ca}

\author{Philip W. L. Fong}
\affiliation{
    \institution{Department of Computer Science\\University of Calgary}
    \city{Calgary}
    \country{Canada}
}
\email{pwlfong@ucalgary.ca}
}{
  \author{Anonymized}
}

\begin{abstract}
  Protection domains are one of the most enduring concepts in Access
  Control. Entities with identical access control characteristics are
  grouped under the same protection domain, and domain-based policies
  assign access privileges to the protection domain as a whole. With
  the advent of the Internet of Things (IoT), devices play the roles
  of both subjects and objects. Domain-based policies are particularly
  suited to support this symmetry of roles.

  This paper studies the mining of domain-based policies from
  incomplete access logs. We began by building a theory of
  domain-based policies, resulting in a polynomial-time algorithm that
  constructs the optimal domain-based policy out of a given access
  control matrix. We then showed that the problem of domain-based
  policy mining (DBPM) and the related problem of mining policies for
  domain and type enforcement (DTEPM) are both NP-complete. Next, we
  looked at the practical problem of using a MaxSAT solver to solve
  DBPM. We devised sophisticated encodings for this purpose, and
  empirically evaluated their relative performance. This paper thus
  lays the groundwork for future study of DBPM.
\end{abstract}

\keywords{Access control, protection domain, domain and type
  enforcement, policy mining, NP-completeness, MaxSAT encoding.}

\begin{CCSXML}
<ccs2012>
   <concept>
       <concept_id>10002978.10002991.10002993</concept_id>
       <concept_desc>Security and privacy~Access control</concept_desc>
       <concept_significance>500</concept_significance>
       </concept>
 </ccs2012>
\end{CCSXML}

\ccsdesc[500]{Security and privacy~Access control}

\maketitle

\section{Introduction}

Protection domains are one of the most enduring concepts in the study
of Access Control. Entities with identical access control
characteristics are assigned to the same protection domain, and access
privileges are granted to the protection domain as a whole.  We use
the term \Dfn{domain-based policies} to refer to access control
policies that are formulated in terms of protection domains.  The
use of protection domains can be found in programming language
environments (e.g., Java), databases \cite{Baldwin1990}, and operating
systems. For example, at the core of the Android operating system is
SELinux, which implements \Dfn{domain and type enforcement}
\cite{Badger-etal:1995}.
  
With the advent of the Internet of Things (IoT), devices play the
roles of both subjects and objects.  In other words, a device may
request access to other devices but also act as a resource to be
accessed by others.  An access occurs when a requestor device sends a
message to a resource-bearing device.  Domain-based policies is
particularly suited to support this symmetry of roles.  Recognizing
this advantage, Fuentes Carranza and Fong \cite{Carranza2019} extended
the MQTT broker to provide fine-grained access control for IoT
devices.  More specifically, their extension of the event-based system
architecture \cite{Fiege2002C,Fiege2002J} involves two protection
mechanisms.  First, their extension enforces \Dfn{brokering policies},
so that one can control which network link the broker forward a
message received from another network link.  Second, \Dfn{execution
  monitors} \cite{Schneider2000,Ligatti2009} can be interposed on
network links, so that messages can be suppressed or transformed.
When combined, these mechanisms can enforce a form of domain-based
policies over devices.

This work studies the mining of domain-based policies from access
logs.  We take into consideration that in IoT systems, each entity
(device) plays the dual role of requestor and resource. Access control
is about controlling who can send which kind of message to whom.  The
authoring of access control policies, however, is a highly demanding
task if it is done from scratch.  It is therefore desirable to
automatically mine a domain-based policy from existing access
logs. When the access logs are complete, they allow us to reconstruct
the entire access control matrix. Policy mining then corresponds to
the construction of a domain-based policy from a fully specified
access control matrix.  When the access logs are incomplete, the
corresponding access control matrix will have missing entries. Such a
missing entry can be interpreted as ``don't care.''  This, in theory,
gives the policy mining algorithm more freedom to maneuver, in the
sense that the algorithm may interpret each missing entry as either a
grant or a deny.  Every possible instantiation of the missing entries
leads to the construction of a different domain-based policy.  We
desire an instantiation that results in the smallest number of
protection domains.

We claim the following contributions:
\begin{asparaenum}
\item In \S \ref{sec:algorithm}, we developed a theory of domain-based
  policies, in which we characterized domain-based policies in terms
  of digraph strong homomorphism and an efficiently checkable
  equivalence relation.  Armed with this understanding, we devised a
  polynomial-time algorithm for constructing a domain-based policy
  from an access control matrix.  This is essentially an algorithm for
  mining domain-based policies from \emph{complete} access logs.
\item In \S \ref{sec:formalization}, we showed that the problem of
  mining domain-based policies from \emph{incomplete} access logs
  (DBPM) is NP-complete.
\item Generalizing domain-based policies, domain and type enforcement
  \cite{Badger-etal:1995} classifies an entity in two ways: (a) its
  subject characteristics (domain) and (b) its object characteristics
  (type). We showed, in \S \ref{sec:DTE}, that while this dual
  classification could lead to a small number of domain/type labels,
  the policy mining problem remains NP-complete.
\item In \S \ref{sec:encoding}, we addressed the practical challenge
  of solving DBPM. To this end, we devised multiple MaxSAT encodings
  for DBPM. These encodings involve sophisticated optimization
  techniques (e.g., symmetry breaking) that are not attempted in the
  policy analysis and policy mining literature.
\item In \S \ref{sec:evaluation}, we empirically evaluate the
  aforementioned encodings, and demonstrate the utility of the
  optimization techniques.
\end{asparaenum}

\section{A Theory of Domain-Based Policies}
\label{sec:algorithm}

Suppose we have a collection of access logs from the past, telling us
which devices may (or may not) send messages of certain topics to
which other devices.  The access logs correspond to entries in the
access control matrix.  The construction of a domain-based policy from
these access logs is what we call \Dfn{domain-based policy mining}.
In practice, the logs do not fully specify the access control matrix:
i.e., some of the entries in the access control matrix are left
unspecified. The general problem of mining a domain-based policy from
a partially specified access control matrix is going to be studied in
\S \ref{sec:formalization}. In this section, we explore a special case
of that general problem, the case when the access logs fully specify
an access control matrix. While this special case is not typical in
practice, this study gives us the opportunity to develop a theory of
domain-based policies. Such a theoretical understanding will be
exploited in \S \ref{sec:formalization} for studying the general
problem.

Proofs of propositions and theorems in this section can be found in
Appendix \ref{app:proofs}.

\subsection{From Extension to Abstraction}

In the literature of Access Control, there are several styles of
policy specification. A first style is the \Dfn{extensional}
specification of a policy, in which all permitted access requests are
exhaustively enumerated.  The classical example of this style of
specification is the access control matrix \cite{Lampson1974}. A
second style of specification is to create \Dfn{access control
  abstractions} to facilitate policy administration.\footnote{Modern
  access control paradigms may also exploit \Dfn{intensional} policy
  specifications, in which a policy expresses the condition of access
  in a declarative manner. Examples include ABAC \cite{Hu2015} and
  ReBAC \cite{Fong2011}.}  A classical example is the creation of
roles as an intermediary concept \cite{Sandhu1996}.  Other access
control abstractions include protection domains, user groups, types
\cite{Badger-etal:1995}, demarcations \cite{Kuijper2014}, and
categories \cite{Barker:2009}.  The general idea is that entities with
identical access control characteristics are grouped into higher-level
concepts, so that the same access restrictions are applied uniformly
to the entire group. This practice makes policy administration more
tractable, and decouples the management of concept membership from the
assignment of permissions.  Mining domain-based policies is
essentially the inference of the second type of policy specification
(abstraction-based) from the first type of policy specification
(extensional). In the following, we formalize extensional policy
specification and abstraction-based policy specification.

Following Graham-Denning \cite{Graham1972} and Lipton-Snyder
\cite{Lipton1977, Snyder1981}, we represent access
control matrices by edge-labelled directed graphs.
\begin{definition} \label{def:eldigraph} Fix a finite set $\Sigma$ of
  \Dfn{access rights}.  An \Dfn{edge-labelled directed graph} (or
  \Dfn{digraph} for brevity) $G$ is a pair $(V, E)$, where
  $E \subseteq V \times \Sigma \times V$. We write $V(G)$ and $E(G)$
  for the sets $V$ and $E$.
\end{definition}

Intuitively, a digraph specifies an access control policy in an
extensional manner (i.e., exhaustive enumeration of all permitted
access requests). In particular, the vertices are either subjects or
objects. If there is an edge $(u, a, v)$, then $u$ is permitted to
exercise access right $a$ against $v$.  Otherwise, the absence of an
edge specifies a prohibition.  The edge label $a$ can be interpreted
in many ways: access right, access mode, access event, method
invocation, message topic, etc.  For instance, in Fuentes Carranza and
Fong \cite{Carranza2019}, an edge $(u,a,v)$ could be interpreted as
the access request ``device $u$ sends a message of topic $a$ to device
$v$.''  In this sense, a digraph carries the same information as an
access control matrix.

Standard graph-theoretic concepts, such as subgraphs, isomorphism,
etc, can be defined accordingly. We only highlight the notion of
induced subgraphs, for it plays an important role in the sequel: Given
$U \subseteq V(G)$, we write $G[U]$ for the \Dfn{subgraph of $G$
  induced by $U$}: i.e., the digraph
$(U, E(G) \cap (U\times\Sigma\times U))$.

The mining of domain-based policies can be seen as the process of
moving from the exhaustive enumeration of accessibility to a more
succinct summary of the same policy. In particular, summarizing $G$
involves several steps:
\begin{enumerate}
\item Create a digraph $H$.  Each vertex of $H$ represents a
  protection domain. An edge $(u, a, v) \in E(H)$ indicates that an
  entity assigned to domain $u$ is permitted to perform access $a$ to
  any entity assigned to domain $v$.
\item Create a \Dfn{protection domain assignment}
  $\pi : V(G) \rightarrow V(H)$ that assigns each entity (i.e., a
  vertex from $G$) to a protection domain (i.e., a vertex from $H$).
\item When an access request $(u, a, v)$ is issued, the protection
  mechanism grants access if the edge $(\pi(u), a, \pi(v))$ is in
  digraph $H$, and denies access otherwise.
\end{enumerate}
Applying this scheme to the work of Fuentes Carranza and Fong
\cite{Carranza2019}, vertices of $H$ are the brokers, edges of $H$
specify accessibility, and $\pi$ assigns devices to brokers.
\begin{definition}[Domain-Based Policy]
  Given a digraph $G$, a \Dfn{do\-main-based policy (for $G$)} is a pair
  $(H, \pi)$, where $H$ is a digraph and $\pi: V(G) \rightarrow V(H)$
  is a protection domain assignment.
\end{definition}
While $G$ can be very large, $H$ is expected to be of manageable
size. This arrangement has multiple advantages: (i) the management of
$\pi$ (i.e., domain membership) can be decoupled from the
administration of $H$ (i.e., specification of accessibility), and (ii)
the administration of $H$ is more tractable than the direct
administration of $G$.

A potential problem with the above scheme is that $H$ may not properly
summarize the accessibility expressed in $G$.  It may grant an access
when the latter should have been denied, or deny accesses that should
have been permitted.
\begin{definition}[Enforcement]
  Domain-based policy $(H, \pi)$ \Dfn{enforces} digraph $G$ whenever
  the following holds: for $u,v \in V(G)$ and $a \in \Sigma$,
  $(u, a, v) \in E(G)$ iff $(\pi(u), a, \pi(v)) \in E(H)$.
\end{definition}
For $H$ to be a ``correct'' summary of $G$, the assignment mapping
$\pi$ needs to be a strong homomorphism.
\begin{definition}
  Given digraphs $G$ and $H$, a \Dfn{strong homomorphism} from $G$ to
  $H$ is a function $\pi : V(G) \rightarrow V(H)$ such that
  $(u, a, v) \in E(G)$ iff $(\pi(u), a, \pi(v)) \in E(H)$. $G$ is
  \Dfn{strongly homomorphic} to $H$ iff there is a strong homomorphism
  from $G$ to $H$.
\end{definition}
The next result follows immediately from the definition above.
\begin{proposition} \label{Prop:Fault-Empty} A domain-based policy
  $(H, \pi)$ enforces a digraph $G$ iff $\pi$ is a strong homomorphism
  from $G$ to $H$.
\end{proposition}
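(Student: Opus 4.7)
The plan is to observe that this proposition is essentially a tautological unpacking of the two definitions, so the proof reduces to matching them symbol for symbol. Both the enforcement condition and the strong homomorphism condition are stated as the same biconditional on triples: $(u, a, v) \in E(G)$ iff $(\pi(u), a, \pi(v)) \in E(H)$, quantified over all $u, v \in V(G)$ and all $a \in \Sigma$. So there is no real content to extract beyond noticing this coincidence.

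Concretely, I would structure the proof in two directions. For the forward direction, I would assume $(H, \pi)$ enforces $G$. By the definition of enforcement, $\pi : V(G) \to V(H)$ satisfies $(u, a, v) \in E(G)$ iff $(\pi(u), a, \pi(v)) \in E(H)$ for all $u, v \in V(G)$ and $a \in \Sigma$. This is precisely the defining condition of a strong homomorphism from $G$ to $H$, so $\pi$ is such a homomorphism. For the reverse direction, assuming $\pi$ is a strong homomorphism from $G$ to $H$, the same biconditional holds by definition, and this is exactly what it means for $(H, \pi)$ to enforce $G$.

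There is no real obstacle here; the author's remark that the result ``follows immediately from the definition above'' is accurate. The only care I would take is to make sure the quantifications over $u$, $v$, and $a$ in the two definitions match, which they do. One could also note, as a stylistic matter, that this proposition is doing the useful work of licensing the later reformulation of the policy mining problem as a strong-homomorphism-finding problem; even though the proof is trivial, the statement is what lets subsequent sections exploit graph-theoretic machinery for strong homomorphisms.
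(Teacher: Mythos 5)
Your proof is correct and matches the paper's treatment: the paper offers no explicit proof, stating only that the result follows immediately from the definitions, and your symbol-for-symbol matching of the enforcement condition with the strong-homomorphism condition is precisely that observation spelled out.
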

Essentially, mining a domain-based policy for $G$ involves finding (i)
a digraph $H$ and (ii) a strong homomorphism $\pi$ from $G$ to $H$.

\subsection{Most Succinct Summary}
\label{sec:most-succinct}

Every digraph $G$ is strongly homomorphic to itself (via an
isomorphism). Such a ``summary'' is not succinct at all. What we
desire is for the summary to compress information as much as
possible. The best summary is a digraph that cannot be further
summarized.
\begin{definition} \label{def:summary} Finite digraph $H$ is a
  \Dfn{summary} of finite digraph $G$ iff (a) $G$ is strongly
  homomorphic to $H$, and (b) $G$ is not strongly homomorphic to any
  proper subgraph of $H$.
\end{definition}

The definition above has three important consequences. The first is that
the strong homomorphism from a digraph to its summary is always a
surjection, meaning that a summary contains no redundant vertices.
\begin{proposition} \label{prop:surjective}
  If $H$ is a summary of $G$ via the strong homomorphism $\pi$, then
  $\pi$ is surjective.
\end{proposition}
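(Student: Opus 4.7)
The plan is to proceed by contraposition. I would assume for contradiction that $\pi$ is not surjective, exhibit a proper subgraph $H'$ of $H$ to which $G$ remains strongly homomorphic (via the same $\pi$), and thereby violate clause (b) of Definition~\ref{def:summary}.

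Concretely, suppose there exists $w \in V(H) \setminus \pi(V(G))$. Take $H' = H[V(H) \setminus \{w\}]$, the subgraph of $H$ induced by deleting $w$. Since $w \in V(H)$ but $w \notin V(H')$, $H'$ is indeed a proper subgraph of $H$. The remaining task is to verify that $\pi$, now viewed as a map $V(G) \to V(H')$, is still a strong homomorphism. The map is well-defined because $w$ lies outside the image of $\pi$, so every $\pi(u)$ is a vertex of $H'$. For the biconditional on edges, fix $u, v \in V(G)$ and $a \in \Sigma$. In the forward direction, if $(u,a,v) \in E(G)$ then $(\pi(u), a, \pi(v)) \in E(H)$ by assumption on $\pi$; since neither endpoint equals $w$, this edge survives in $E(H')$. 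In the backward direction, any edge of $H'$ is already an edge of $H$, so $(\pi(u), a, \pi(v)) \in E(H')$ immediately gives $(u,a,v) \in E(G)$.

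Putting the two directions together, $G$ is strongly homomorphic to the proper subgraph $H'$ of $H$, contradicting the assumption that $H$ is a summary of $G$. Hence no such $w$ exists and $\pi$ must be surjective.

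I do not anticipate a real obstacle here; the argument is essentially unpacking the minimality clause in the definition of a summary. The only point that requires a moment of care is checking that deleting an unused vertex via an induced subgraph preserves both directions of the strong-homomorphism condition, which hinges on using an \emph{induced} subgraph (so no edges are lost beyond those incident to $w$) together with the fact that $w$ is missed by $\pi$ (so no required image vertex is destroyed).
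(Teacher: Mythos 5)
Your proof is correct and takes essentially the same approach as the paper: both exhibit a proper induced subgraph of $H$ (the paper restricts to $H[\mathit{range}(\pi)]$, you delete a single unused vertex) to which $G$ remains strongly homomorphic via the same $\pi$, contradicting clause (b) of Definition~\ref{def:summary}. Your version just spells out the verification of the strong-homomorphism biconditional in more detail.
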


The second consequence of Definition \ref{def:summary} is that a
summary is always ``minimal,'' in the sense that the summary cannot be
further summarized.  
\begin{definition}[Irreducible Digraph] \label{def:irreducible} A
  digraph $G = (V, E)$ is \Dfn{irreducible} iff every summary $H$ of
  $G$ is isomorphic to $G$.
\end{definition}
It can be shown that summaries are irreducible.
\begin{proposition} \label{prop:irreducible}
  Suppose there is a surjective strong homomorphism
  from digraph $G$ to digraph $H$. Then $H$ is a
  summary of $G$ iff $H$ is irreducible.
\end{proposition}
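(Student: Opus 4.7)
The plan is to apply a \emph{section} trick throughout: whenever $f:A\to B$ is a surjective strong homomorphism, I will pick $\psi:V(B)\to V(A)$ with $f\circ\psi=\mathrm{id}_{V(B)}$, so that $A[\psi(V(B))]$ is a copy of $B$ sitting inside $A$ as an induced subgraph.

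For the $(\Rightarrow)$ direction, let $K$ be any summary of $H$, witnessed by a strong hom $\sigma:H\to K$ that is surjective by Proposition~\ref{prop:surjective}; the goal is $K\cong H$. Suppose for contradiction $|V(K)|<|V(H)|$. Picking a section $\psi$ of $\sigma$ and setting $U=\psi(V(K))$, the induced subgraph $H[U]$ is a proper subgraph of $H$. Define $\tau:=\psi\circ\sigma\circ\pi$; because $\sigma\circ\tau=\sigma\circ\pi$, a short chain of equivalences using the strong-hom properties of $\sigma$ and $\pi$ gives $(\tau(x),a,\tau(y))\in E(H[U]) \iff (x,a,y)\in E(G)$, so $\tau$ is a strong hom $G\to H[U]$, contradicting the summary hypothesis on $H$. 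Hence $|V(K)|=|V(H)|$, the surjection $\sigma$ is bijective, and any bijective strong hom is an isomorphism.

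For the $(\Leftarrow)$ direction, assume $H$ is irreducible. Condition~(a) of Definition~\ref{def:summary} is immediate from the given $\pi$. For condition~(b), suppose for contradiction $G$ is strongly homomorphic to a proper subgraph $H'\subsetneq H$ via some $\rho$. The key construction is a fiber product: let $V(T)=\{(\pi(x),\rho(x)):x\in V(G)\}$ and declare $((u_1,v_1),a,(u_2,v_2))\in E(T)$ iff $(u_1,a,u_2)\in E(H)$. The consistency observation is that each $(u_i,v_i)\in V(T)$ equals $(\pi(x_i),\rho(x_i))$ for some $x_i$, so $(u_1,a,u_2)\in E(H) \iff (x_1,a,x_2)\in E(G) \iff (v_1,a,v_2)\in E(H')$; hence the two natural edge definitions coincide, and the projections $\alpha:T\to H$ and $\beta:T\to H'$ are strong homs, with $\alpha$ surjective because $\pi$ is. Picking a section $\phi$ of $\alpha$ and setting $\xi:=\beta\circ\phi$, the same edge-by-edge reasoning yields a strong hom $\xi:H\to H'$. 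A case split completes the contradiction: if $\xi$ is injective, then $\xi(V(H))\subseteq V(H')\subseteq V(H)$ forces $V(H')=V(H)$, whence $E(H')\subsetneq E(H)$, while bijective strong-hom correspondence forces $|E(H)|=|E(H')|$, a contradiction; if $\xi$ is not injective, then $H$ is surjectively strongly homomorphic to the induced subgraph $H'[\xi(V(H))]$ of $H$, which has fewer vertices than $H$. In the latter case, among all targets of strong homs from $H$, choosing one that is lexicographically minimal in $(|V(K)|,|E(K)|)$ gives a summary of $H$ with strictly fewer vertices than $H$, contradicting irreducibility.

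The principal obstacle is the $(\Leftarrow)$ direction, specifically the construction of $\xi:H\to H'$ from the given $\rho:G\to H'$. The fiber product $T$ is the natural common refinement of $\pi$ and $\rho$, but the projections behave as strong homomorphisms only because $V(T)$ is restricted to pairs realized by vertices of $G$: this restriction is what forces the $E(H)$- and $E(H')$-edge conditions to agree. The section $\phi$ of $\alpha$ then lifts the map from the $G$-level up to $H$, after which the bijective/non-bijective case split and the lex-minimal summary argument finish the proof.
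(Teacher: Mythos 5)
Your proof is correct, and at its core it runs on the same engine as the paper's: in both directions you transfer a hypothetical "reduction" between $G$ and $H$ by composing strong homomorphisms with a section (choice of representatives) of a surjective strong homomorphism. The differences are in packaging. For ($\Leftarrow$), the paper first invokes Lemma~\ref{lem:isomorphic} to realize $H$ as an induced subgraph $G[U]$ and then restricts $\rho$ to $U$; your fiber product $T$ with the section $\phi$ of $\alpha$ produces exactly the map $u \mapsto \rho(x_u)$ for a chosen $\pi$-preimage $x_u$ of $u$, so the two constructions coincide and the fiber product is an (arguably heavier) repackaging of the same idea. The genuine divergence is how irreducibility is consumed: the paper routes both directions through Lemma~\ref{lem:char-irred-digraphs} ("irreducible iff not strongly homomorphic to any proper subgraph"), whereas you work directly from Definition~\ref{def:irreducible} and therefore must manufacture a summary of $H$ witnessing non-irreducibility. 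Your lexicographic-minimization step does this cleanly, and as a bonus it supplies an explicit existence proof for summaries -- a point the paper's proof of Lemma~\ref{lem:char-irred-digraphs} (the step "$G$ is irreducible, hence $G$ is its own summary") glosses over. The paper's route is shorter because the characterization lemma amortizes that work; yours is more self-contained at the cost of reproving a direction of that lemma inline.
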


The third consequence of Definition \ref{def:summary} is that
summaries are unique.
\begin{proposition} \label{prop:unique}
  The summary of a digraph $G$ is unique up to isomorphism.
\end{proposition}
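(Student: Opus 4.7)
The plan is to build a digraph isomorphism between $H_1$ and $H_2$ out of the two surjective strong homomorphisms witnessing each of them as a summary. Let $H_1$ and $H_2$ be summaries of $G$; by Proposition \ref{prop:surjective}, the accompanying strong homomorphisms $\pi_1 : G \to H_1$ and $\pi_2 : G \to H_2$ are surjective. My candidate isomorphism is the map $f$ sending $\pi_1(v) \mapsto \pi_2(v)$, paired with its mirror $g$ sending $\pi_2(v) \mapsto \pi_1(v)$. Once both maps are shown to be well-defined, they are mutual inverses by construction, and $f$ is a strong homomorphism because for all $u, v \in V(G)$ and $a \in \Sigma$,
\[
(\pi_1(u), a, \pi_1(v)) \in E(H_1) \iff (u, a, v) \in E(G) \iff (\pi_2(u), a, \pi_2(v)) \in E(H_2),
\]
so $f$ would immediately be the desired bijective strong homomorphism.

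The main obstacle is well-definedness, which I would establish by contradiction, leveraging the summary hypothesis on $H_2$ (and, symmetrically, on $H_1$). Suppose $\pi_1(u) = \pi_1(v)$ for some $u, v \in V(G)$ with $\pi_2(u) \neq \pi_2(v)$. Because $\pi_1$ is a strong homomorphism, I immediately get, for every $w \in V(G)$ and $a \in \Sigma$, the equivalences $(u, a, w) \in E(G) \iff (v, a, w) \in E(G)$ and $(w, a, u) \in E(G) \iff (w, a, v) \in E(G)$. Translating these through $\pi_2$ and exploiting the surjectivity of $\pi_2$, I obtain that the distinct vertices $\pi_2(u)$ and $\pi_2(v)$ of $H_2$ possess identical outgoing and incoming $a$-edges with every vertex of $H_2$, for every access right $a$.

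Having identified an indistinguishable pair of vertices inside $H_2$, I would merge them to produce a proper subgraph $H_2'$ of $H_2$ that $G$ still maps to via a strong homomorphism, contradicting the summary hypothesis. Concretely, let $\sigma : V(H_2) \to V(H_2) \setminus \{\pi_2(v)\}$ fix every vertex except $\pi_2(v)$, which it sends to $\pi_2(u)$; define $V(H_2') := V(H_2) \setminus \{\pi_2(v)\}$ and $E(H_2') := \{(\sigma(x), a, \sigma(y)) : (x, a, y) \in E(H_2)\}$. The ``identical edges'' property of $\pi_2(u)$ and $\pi_2(v)$ guarantees that each rerouted triple $(\sigma(x), a, \sigma(y))$ was already an edge of $H_2$, so $E(H_2') \subseteq E(H_2)$, and since $\pi_2(v)$ has been removed, $H_2'$ is a proper subgraph of $H_2$. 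The composition $\sigma \circ \pi_2 : G \to H_2'$ is a strong homomorphism by construction, contradicting $H_2$ being a summary. A symmetric argument shows that $g$ is well-defined, completing the plan.
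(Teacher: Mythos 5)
Your proof is correct, but it takes a genuinely different route from the paper's. The paper factors the argument through its irreducibility machinery: by Proposition \ref{prop:irreducible} both summaries are irreducible, by Lemma \ref{lem:isomorphic} one summary $H$ is isomorphic (via some $g$) to an induced subgraph of $G$, and the composite $h' \circ g$ is then shown injective (Lemma \ref{lem:char-irred-digraphs}: strong homomorphisms out of an irreducible digraph are injective) and surjective (else $G$ would map into a proper subgraph of $H'$). You instead construct the canonical map $f(\pi_1(v)) = \pi_2(v)$ directly and concentrate the entire difficulty into well-definedness, which you settle by showing that $\pi_1(u) = \pi_1(v)$ forces $\pi_2(u)$ and $\pi_2(v)$ to be indistinguishable in $H_2$, and that a summary cannot contain two distinct indistinguishable vertices because merging them yields a strong homomorphism of $G$ onto a proper (induced) subgraph. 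The merging step checks out, including the self-loop case $(x,a,y)=(\pi_2(v),a,\pi_2(v))$, which needs the identical-adjacency property applied at both endpoints. Your argument is more elementary and self-contained --- it bypasses Proposition \ref{prop:irreducible} and the appendix lemmas entirely, needing only Proposition \ref{prop:surjective} --- and it additionally exhibits the isomorphism canonically as the unique $f$ with $f \circ \pi_1 = \pi_2$; what it gives up is reuse of the structural lemmas the paper has already established, and in effect it re-derives special cases of Lemma \ref{lem:sh_inverse} and of the indistinguishability characterization underlying Theorem \ref{thm:offline-correctness} inline.
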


\subsection{Constructing Digraph Summary}

In the rest of this section, we examine how we can compute a
domain-based policy $(H, \pi)$ for $G$, such that $H$ is a summary of
$G$. This corresponds to the construction of a domain-based policy
from complete access logs ($G$).  While Definition \ref{def:summary} is a
``global'' characterization of a digraph summary, we do not yet have a
way to construct this summary algorithmically.  An important
contribution of this section is the proposal of a ``local'' means for
computing a digraph summary. The algorithmic construction is based on
an equivalence relation defined over the vertex set of $G$, a relation
that can be tested efficiently.

\begin{definition}[Indistinguishable Vertices] 
    \label{def:indistinct}
    Two vertices $u$, $v$ in a digraph $G$ are \Dfn{indistinguishable}
    iff both of the following conditions hold for every $a \in \Sigma$:
    \begin{enumerate}
    \item Either the following four edges all belong to $E(G)$ or they do
      not: $(u, a, u)$, $(u, a, v)$, $(v, a, u)$, $(v, a, v)$.
    \item For every $x \in V(G) \setminus \{ u, v \}$,
      \begin{enumerate}
      \item $(u, a, x) \in E(G)$ iff $(v, a, x) \in E(G)$, and
      \item $(x, a, u) \in E(G)$ iff $(x, a, v) \in E(G)$.
      \end{enumerate}
    \end{enumerate}
    We write $u \equiv_G v$ (or simply $u \equiv v$ when there is no
    ambiguity) to assert that $u$ and $v$ are indistinguishable.
\end{definition}
In other words, two vertices are indistinguishable in a digraph when
(i) edges with a label $a$ either form a complete or empty subgraph
between the two vertices; and (ii) their adjacencies (and
non-adjacencies) with other vertices are the same.  A special case is
$u \equiv v$ whenever $u = v$.

\begin{proposition}
  \label{prop:equiv_relation}
  The relation $\equiv_G$ is an equivalence relation.
\end{proposition}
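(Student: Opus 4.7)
The plan is to verify the three properties of an equivalence relation directly from Definition \ref{def:indistinct}: reflexivity, symmetry, and transitivity.

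Reflexivity ($u \equiv_G u$) is immediate: condition (1) applied with $v = u$ requires that the single edge $(u, a, u)$ be either in $E(G)$ or not, which is a tautology; condition (2) applied with $v = u$ reduces to biconditionals of the form $(u, a, x) \in E(G) \iff (u, a, x) \in E(G)$. Symmetry follows because each clause of Definition \ref{def:indistinct} is already symmetric in $u$ and $v$: the four edges in condition (1) are permuted by swapping $u$ and $v$, and the biconditionals in condition (2) are symmetric in their two sides.

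The main work is transitivity. Suppose $u \equiv_G v$ and $v \equiv_G w$; I want $u \equiv_G w$. The nontrivial case is when $u$, $v$, $w$ are pairwise distinct (otherwise the conclusion follows from one of the given hypotheses, or from reflexivity). Fix $a \in \Sigma$. For condition (1) of $u \equiv_G w$, I need the four edges $(u,a,u)$, $(u,a,w)$, $(w,a,u)$, $(w,a,w)$ to jointly be in or out of $E(G)$. I would chain: $(u,a,u) \iff (v,a,v)$ from condition (1) of $u \equiv v$, and $(v,a,v) \iff (w,a,w)$ from condition (1) of $v \equiv w$. To reach the mixed edges, I apply condition (2) of $v \equiv w$ with $x = u$ to get $(u,a,v) \iff (u,a,w)$, combine with condition (1) of $u \equiv v$ to conclude $(u,a,u) \iff (u,a,w)$; a symmetric argument using condition (2) of $u \equiv v$ with $x = w$ gives $(w,a,u) \iff (u,a,u)$.

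For condition (2) of $u \equiv_G w$, fix $x \in V(G) \setminus \{u, w\}$. If $x \notin \{v\}$ as well, then chaining condition (2) of the two hypotheses yields $(u,a,x) \iff (v,a,x) \iff (w,a,x)$, and similarly in the other direction. The case $x = v$ reduces to relating $(u,a,v)$ and $(w,a,v)$ (and $(v,a,u)$ with $(v,a,w)$), which I have already verified above via condition (1) of each hypothesis combined with the equivalences between self-loops. The main obstacle is keeping the bookkeeping straight in this last step, because condition (2) of Definition \ref{def:indistinct} explicitly excludes the endpoints, so the edges incident to $v$ must be handled through condition (1) rather than directly. Assembling these biconditionals establishes $u \equiv_G w$ and completes the proof.
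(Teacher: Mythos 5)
Your proof is correct, but it takes a different route from the paper. You verify reflexivity, symmetry, and transitivity directly from Definition~\ref{def:indistinct}, which forces you to juggle the case split built into that definition (condition (1) for edges among $\{u,v\}$ versus condition (2) for third vertices $x$); the delicate part, as you note, is the $x=v$ case of transitivity, where edges incident to the middle vertex must be routed through condition (1) of each hypothesis, and your chains there do go through. The paper instead first establishes Lemma~\ref{lem:indistinct}, which restates $u \equiv_G v$ uniformly as ``for every $a$ and \emph{every} $x$, $(u,a,x)\in E(G) \iff (v,a,x)\in E(G)$ and $(x,a,u)\in E(G) \iff (x,a,v)\in E(G)$'' --- i.e., row $u$ equals row $v$ and column $u$ equals column $v$ of the adjacency structure --- after which the proposition is immediate, since equality of rows/columns is trivially reflexive, symmetric, and transitive. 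Your approach buys self-containedness at the cost of bookkeeping; the paper's buys a one-line proof plus a reusable characterization (Lemma~\ref{lem:indistinct} is invoked again in the proofs of Lemma~\ref{lem:sh_inverse} and Theorem~\ref{thm:offline-correctness}), so the effort of handling the endpoint cases is paid once inside the lemma rather than inside each downstream argument.
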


When $G$ is finite, checking if two vertices are equivalent takes time
linear to $|V(G)|$, a computational advantage we will exploit later in
devising algorithms.  Not only that, the equivalence relation also
provides us with a way to construct the summary of $G$.
\begin{definition}
  Suppose $G$ is a digraph, and $\equiv$ is $\equiv_G$. Recall that,
  given $v \in V(G)$, $[v]_\equiv$ is the equivalence class containing
  $v$.  Then \Sum{G} is defined to be the digraph $(V, E)$ such that
  $V = \{\, [v]_\equiv \mid v \in V(G) \,\}$ and
  $E = \{\, ([u]_\equiv, a, [v]_\equiv) \mid (u,a,v) \in E(G) \,\}$.
\end{definition}
We are now ready to state the main result of this section: a summary
always exists for any digraph, and it is isomorphic to \Sum{G}.  This
theorem provides a programmatic way to construct a summary.
\begin{theorem} \label{thm:offline-correctness}
  \Sum{G} is the summary of $G$.
\end{theorem}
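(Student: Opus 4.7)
The plan is to combine the two earlier propositions on summaries with a direct construction. The natural candidate for the required strong homomorphism is $\pi: V(G) \to V(\Sum{G})$ defined by $\pi(v) = [v]_\equiv$. Surjectivity is immediate from the definition of $V(\Sum{G})$, and the forward direction of the strong-homomorphism condition ($(u,a,v) \in E(G)$ implies $([u]_\equiv, a, [v]_\equiv) \in E(\Sum{G})$) is built into the definition of $E(\Sum{G})$. Hence the first substantive step is to verify the backward direction: if $([u]_\equiv, a, [v]_\equiv) \in E(\Sum{G})$ then $(u,a,v) \in E(G)$. Unfolding the definition, there exist $u' \equiv u$ and $v' \equiv v$ with $(u',a,v') \in E(G)$, and I would push this edge through the indistinguishability relations twice (first from $u'$ to $u$, then from $v'$ to $v$) to recover $(u,a,v)$. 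This gives us a surjective strong homomorphism from $G$ to $\Sum{G}$.

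Next, I would show that $\Sum{G}$ is irreducible. Suppose $\sigma : \Sum{G} \to H$ is a strong homomorphism witnessing that $\Sum{G}$ is strongly homomorphic to some digraph $H$. Then the composition $\sigma \circ \pi$ is a strong homomorphism from $G$ to $H$. I would show $\sigma$ is injective by contradiction: if $\sigma([u]_\equiv) = \sigma([v]_\equiv) = w$ but $[u]_\equiv \neq [v]_\equiv$, then for every $a \in \Sigma$ and $x \in V(G)$, the strong-homomorphism condition forces the membership of $(u,a,x)$, $(v,a,x)$, $(x,a,u)$, $(x,a,v)$ in $E(G)$ to be determined by the membership of the corresponding $H$-edges incident to $w$, making conditions 2(a) and 2(b) of Definition~\ref{def:indistinct} hold. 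For condition 1, each of the four edges $(u,a,u)$, $(u,a,v)$, $(v,a,u)$, $(v,a,v)$ is equivalent to $(w,a,w) \in E(H)$, so all four stand or fall together. Thus $u \equiv v$, contradicting $[u]_\equiv \neq [v]_\equiv$. Combined with the surjectivity forced by Proposition~\ref{prop:surjective} applied to any summary $H$, this makes $\sigma$ a bijection, which together with the strong-homomorphism property yields an isomorphism; hence $\Sum{G}$ is irreducible.

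Finally, I would invoke Proposition~\ref{prop:irreducible} with $G$, $\Sum{G}$, and $\pi$: since $\pi$ is a surjective strong homomorphism and $\Sum{G}$ is irreducible, $\Sum{G}$ is a summary of $G$. Uniqueness up to isomorphism (Proposition~\ref{prop:unique}) then justifies the definite article in the statement of the theorem.

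The main obstacle I expect is the backward direction of the strong-homomorphism check for $\pi$, together with verifying condition 1 of indistinguishability in the irreducibility argument. Both require tracking how self-loops and edges between $u$ and $v$ interact when the two vertices collapse to a common image, and the four-edge clause of Definition~\ref{def:indistinct} is precisely what is needed to make the case analysis go through without inadvertently manufacturing or erasing an edge.
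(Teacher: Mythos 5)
Your proposal is correct, and its first half (showing $\pi(v) = [v]_\equiv$ is a surjective strong homomorphism, with the backward direction obtained by pushing an edge $(u',a,v')$ through the indistinguishability relation once on each endpoint) coincides with the paper's argument. Where you diverge is the minimality half. The paper verifies condition (b) of Definition~\ref{def:summary} directly: assuming $G$ is strongly homomorphic to a proper subgraph of \Sum{G} via $g$, it picks one representative per equivalence class and applies the Pigeonhole Principle to find two inequivalent representatives with the same image under $g$, contradicting Lemma~\ref{lem:sh_inverse} (strong homomorphisms identify only equivalent vertices). You instead prove that \Sum{G} is irreducible --- every strong homomorphism $\sigma$ out of \Sum{G} is injective, because $\sigma \circ \pi$ would otherwise merge two inequivalent vertices of $G$ --- and then invoke Proposition~\ref{prop:irreducible} to conclude that \Sum{G} is a summary. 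Note that the core of your injectivity argument is exactly the content of Lemma~\ref{lem:sh_inverse}, which you re-derive inline; citing it would shorten your second paragraph considerably, and would also let you drop the somewhat tangled closing sentence about bijectivity, since injectivity of every outgoing strong homomorphism already gives irreducibility via Lemma~\ref{lem:char-irred-digraphs}. What your route buys is a cleaner conceptual picture (summary $=$ surjective image $+$ irreducible target, so it suffices to check irreducibility of the candidate); what the paper's route buys is self-containment, needing only the pigeonhole count rather than the machinery of Proposition~\ref{prop:irreducible}. Your final appeal to Proposition~\ref{prop:unique} to justify ``the'' summary is a nice touch the paper leaves implicit.
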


\begin{algorithm}[t] \label{alg:ftu}
  \caption{\textsc{Summarize}($G$)\label{algo:offline}}
  \KwIn{a digraph $G$} 
  \KwOut{
    a summary $H$ of $G$ and the corresponding strong
    homomorphism $\pi$
  }

  \lForEach{$u \in V(G)$}{\textsc{Make-Set}($u$)\label{line:equiv-begin}}
  \ForEach{unorder pair $\{u,v\} \subseteq V(G)$\label{line:equiv-for-pair}}{
    \If{$\textsc{Find-Set}(u) \neq \textsc{Find-Set}(v)$ and $u \equiv_G v$\label{line:test-equiv}}{%
      \textsc{Union}($u$, $v$)\label{line:equiv-end}\;%
    }%
  }
  let $U = \{\, \textsc{Find-Set}(u) \,\mid\, u \in
  V(G)\,\}$\label{line:build-summary}\;
  define $\pi$ such that $\pi(u) =
  \textsc{Find-Set}(u)$\label{line:build-homo}\;
  \Return $(G[U], \pi)$\label{line:return-summary}\;
\end{algorithm}

Theorem \ref{thm:offline-correctness} presents a tractable way to
construct a summary of $G$. The idea, captured in Algorithm
\ref{algo:offline}, is to compute the equivalence classes induced by
$\equiv_G$ and then return an isomorph of \Sum{G}.

Lines \ref{line:equiv-begin}--\ref{line:equiv-end} compute the
equivalence classes of vertices in $G$. We employ the disjoint set
data structure \cite[Ch.~21]{Cormen2009}, so that every disjoint set
models an equivalence class.  We start by putting every vertex in its
own disjoint set (line \ref{line:equiv-begin}). Then we go through
every unordered pair of vertices (line \ref{line:equiv-for-pair}). If
the two vertices belong to separate disjoint sets but they are
equivalent (line \ref{line:test-equiv}), then their disjoint sets are
merged (line \ref{line:equiv-end}).  When the loop terminates, the
disjoint sets are exactly the equivalence classes.

The summary digraph and the corresponding strong homomorphism are
constructed and returned in lines
\ref{line:build-summary}--\ref{line:return-summary}.  Rather than
returning \Sum{G} directly, we return a subgraph of $G$ that is
isomorphic to \Sum{G}.
Specifically, we select a representative member from each equivalence
class.  Since we are already using the disjoint set data structure, we
designate $\textsc{Find-Set}(u)$ to be the representative of
$[u]_{\equiv_G}$.  We collect all representatives in the set $U$ (line
\ref{line:build-summary}), and eventually return $G[U]$ as the summary
of $G$ (line \ref{line:return-summary}).  Also, line
\ref{line:build-homo} sets up the strong homomorphism $\pi$ to map $u$
to its representative $\textsc{Find-Set}(u)$.

To evaluate the running time of Algorithm \ref{algo:offline}, let
$n = |V(G)|$ and $k = |\Sigma|$, and assume that $G$ is represented as
an adjacency matrix.  The running time of Algorithm \ref{algo:offline}
is dominated by the for-loop in lines
\ref{line:equiv-for-pair}--\ref{line:equiv-end}.  That for-loop
iterates for $n \choose 2$ times.  In each iteration, it performs (a)
a number of disjoint-set operations and (b) an equivalence test
($\equiv_G$). By Definition \ref{def:indistinct}, an equivalence test
can be performed in an adjacency matrix in $O(kn)$ time.  Disjoint-set
operations takes $O(\log n)$ time if \Dfn{union-by-rank} is
implemented \cite[Ch.~21]{Cormen2009}.  The running time of
equivalence tests dominates. Therefore, the running time of Algorithm
\ref{algo:offline} is $O(kn^3)$.

\section{Mining a Domain-Based Policy}
\label{sec:formalization}

The last section demonstrates that the construction of a domain-based
policy from complete access logs can be performed efficiently. In this
section, we study the mining of domain-based policies in a more
general setting, one in which the access logs are incomplete. To that
end, we begin by formalizing incomplete access logs.
\begin{definition}
  Fix a finite set $\Sigma$ of access rights.
  \begin{itemize}
  \item A \Dfn{partially specified digraph} \PSG is a triple
    $(V, E, F)$, where the sets $E$ and $F$ are disjoint subsets of
    $V \times \Sigma\times V$.  We write $V(\PSG)$, $E(\PSG)$,
    $F(\PSG)$, and $D(\PSG)$ for the sets $V$, $E$, $F$, and
    $(V \times \Sigma\times V) \setminus (E\cup F)$ respectively.
    (Intuitively, $E(\PSG)$ contains the \Dfn{edges} of $\PSG$,
    $F(\PSG)$ contains the \Dfn{non-edges} (absence of edges), and
    $D(\PSG)$ contains the ``don't care'' (missing data). The
    intention is that \PSG represents incomplete access logs.)
  \item A digraph $G$ is an \Dfn{instantiation} of \PSG if and only if
    $E(\PSG) \linebreak \subseteq E(G)$ and $F(\PSG) \cap E(G) = \emptyset$.
  \end{itemize}
\end{definition}
Our goal is to construct, out of \PSG, a domain-based policy that
grants the access requests that \PSG grants, and denies the access
requests that \PSG denies.  Due to the incomplete nature of \PSG,
there are access requests that are neither granted nor denied by
\PSG. We want our domain-based policy to generalize its authorization
decisions to cover such access requests.  In the literature of policy
mining \cite{Mitra2016}, the common practice is to adopt the Occam's
Razor principle \cite[Ch.~2]{Kearns1994}, and attempt to minimize the
complexity of the constructed policy.  Following this
practice, we construct a domain-based policy with the smallest number
of domains. This decision form of this optimization problem is stated
below.

\textbf{Domain-Based Policy Mining (DBPM)}
\begin{itemize}
\item \textbf{Instance:} A positive integer $m$ and a
  partially specified digraph \PSG.
\item \textbf{Question:} Is there an instance $G$ of \PSG such that
  $\equiv_G$ induces no more than $m$ equivalence classes?
\end{itemize}

The \Dfn{graph sandwich problem for property~$\Pi$} asks: Given graph
$G_1$ and $G_2$ such that $V(G_1) = V(G_2)$ and
$E(G_1) \subset E(G_2)$, does there exist a graph $G$ such that
$V(G) = V(G_1)$, $E(G_1) \subseteq E(G) \subseteq E(G_2)$, and $G$
satisfies property $\Pi$ \cite{Golumbic1995}?  DBPM can be seen as a
graph sandwich problem, in which $G_1$ is obtained by turning all
triples in $D(\PSG)$ to non-edges, $G_2$ is obtained by treating the
members of $D(\PSG)$ as edges, and the property $\Pi$ is ``no more
than $m$ equivalence classes.''  Some graph sandwich problems are
decidable in polynomial time, while others are intractable.  DBPM
belongs to the latter kind.
  
\begin{theorem}
  DBPM is NP-complete.
\end{theorem}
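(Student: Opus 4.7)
My plan is to prove NP-completeness in two steps: first, showing membership in NP via efficient verification, and second, showing NP-hardness by reducing from the classical $k$-\textsc{Colorability} problem. For membership, I would use an instantiation $G$ of the given partially specified digraph as the witness. Verifying that $G$ satisfies $E(\PSG) \subseteq E(G)$ and $F(\PSG) \cap E(G) = \emptyset$ is immediate, and Algorithm \ref{algo:offline} computes the equivalence classes of $\equiv_G$ in polynomial time, allowing us to check that the count is at most $m$.

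For hardness, given an instance $(H, k)$ of $k$-\textsc{Colorability} with $H = (V, E)$, I would build a partially specified digraph $P$ with $V(P) = V$ and access right set $\Sigma = \{a_e : e \in E\}$ (one access right per edge of $H$). For each edge $e = \{u_e, v_e\} \in E$ with a fixed but arbitrary orientation, put $(u_e, a_e, u_e)$ into $E(P)$ and $(v_e, a_e, v_e)$ into $F(P)$; every other triple goes into $D(P)$. Set the target to $m := k$. The construction is polynomial-time in $|V| + |E|$.

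For the forward direction of correctness, given a proper $k$-coloring $c : V \to \{1, \ldots, k\}$ of $H$, I would define an instantiation $G^*$ by declaring $(x, a_e, y) \in E(G^*)$ iff $c(x) = c(y) = c(u_e)$. This respects the forced constraints of $P$: $(u_e, a_e, u_e) \in E(G^*)$ since $c(u_e) = c(u_e)$, and $(v_e, a_e, v_e) \notin E(G^*)$ since $c(v_e) \neq c(u_e)$ by properness. Moreover, any two vertices $x, x'$ sharing a color become indistinguishable under $\equiv_{G^*}$: the outgoing adjacencies $(x, a_e, y)$ and $(x', a_e, y)$ agree for every $y$, incoming adjacencies match similarly, and the four-loop quadruple $(x, a_e, x), (x, a_e, x'), (x', a_e, x), (x', a_e, x')$ is entirely in $E(G^*)$ exactly when $c(x) = c(u_e)$ and otherwise entirely out. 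Hence the number of equivalence classes is at most $k$. For the reverse direction, given any instantiation $G^*$ of $P$ with at most $k$ equivalence classes, I would color each vertex by the index of its equivalence class; for each edge $\{u_e, v_e\} \in E$, the forced edge $(u_e, a_e, u_e) \in E(G^*)$ and forced non-edge $(v_e, a_e, v_e) \notin E(G^*)$ violate the four-loop condition of Definition \ref{def:indistinct} at label $a_e$, so $u_e \not\equiv_{G^*} v_e$, yielding a proper $k$-coloring.

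The main obstacle I expect is engineering the instantiation in the forward direction so that the four-loop condition and the outgoing/incoming agreement conditions of Definition \ref{def:indistinct} are simultaneously met for every pair of same-color vertices and every access right. The product-style formula $c(x) = c(y) = c(u_e)$ is chosen precisely so that, for same-color $x, x'$, each of the four edges in the loop quadruple is in $E(G^*)$ iff the single predicate ``$c(x) = c(u_e)$'' holds, producing the required all-or-nothing behavior automatically and obviating a case analysis over the quadruples.
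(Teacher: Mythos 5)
Your proof is correct, but your reduction is genuinely different from (and leaner than) the one in the paper. Both directions check out: membership in NP is argued the same way in both proofs (guess an instantiation, run \textsc{Summarize}, compare the count), and your hardness reduction is sound --- the forced diagonal entries $(u_e,a_e,u_e)\in E(P)$ and $(v_e,a_e,v_e)\in F(P)$ violate condition (1) of Definition \ref{def:indistinct} for every instantiation, so adjacent vertices can never be merged, while the instantiation $(x,a_e,y)\in E(G^*) \iff c(x)=c(y)=c(u_e)$ makes same-colored vertices indistinguishable (the all-or-nothing loop quadruple and the agreement of in/out adjacencies both reduce to the single predicate $c(x)=c(u_e)$, as you note). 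The paper instead reduces from Graph 3-Colorability via a gadget construction: it introduces $4|V|+3|E|$ vertices ($x_{v,i}$ for each vertex--color pair, $y_v$ for each vertex, $z_{e,i}$ for each edge--color pair), an alphabet of size $|V|+|E|$, and target $m=3|V|$; the $x_{v,i}$ are forced into distinct classes that exactly exhaust the budget, $y_v$ encodes the color of $v$ by which class $[x_{v,i}]$ it joins, and the $z_{e,i}$ enforce properness. Your construction works directly on the original vertex set with one access right per edge and target $m=k$, so it is shorter and, when instantiated with $k=3$, actually yields a slightly stronger conclusion than the paper's: DBPM remains NP-hard even when $m$ is fixed at $3$, whereas the paper's target $3|V|$ grows with the input. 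The paper's gadget version buys an explicit, pinned-down correspondence between colorings and instantiations (every class is named by an $x_{v,i}$), which makes the counting argument in the converse direction mechanical, but nothing in your argument depends on that and no step is missing. One small caveat worth stating explicitly in a final write-up: to get NP-hardness you should fix $k=3$ (or take $k$ as part of the input), since $k$-colorability is polynomial for $k\le 2$; your reduction is valid for all $k$, so this is a matter of presentation rather than correctness.
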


\begin{proof}
  \emph{DBPM is in NP.} A nondeterministic algorithm determines if
  \PSG has an instantiation with no more than $m$ equivalence classes
  by first guessing an instantiation $G$ of \PSG, then employing the
  \textsc{Summarize} algorithm to compute $m^*$, the number of
  equivalence classes induced by $\equiv_G$, and lastly checking that
  $m^* \leq m$.

  \emph{DBPM is NP-hard.} We present a polynomial-time reduction
  from Graph 3-Colorability \cite[p.~191]{Garey-Johnson:1979} to DBPM.
  We denote the three colors by $1$, $2$, and $3$.

  Given a Graph 3-Colorability instance $H = (V, E)$, which is an
  undirected graph, the reduction generates a DBPM instance
  $(3 \times |V|, \PSG)$, where \PSG is defined as follows:
  \begin{itemize}
  \item $\Sigma = \{ a_v \,\mid\, v \in V \} \cup
    \{ b_e \,\mid\, e \in E \}$.
  \item $V(\PSG)$ contains the following vertices:
    \begin{itemize}
    \item $x_{v, i}$ for every $v \in V$ and $i \in \{1,2,3\}$
    \item $y_v$ for every $v \in V$
    \item $z_{e, i}$ for every $e \in E$ and $i \in \{1,2,3\}$
    \end{itemize}
  \item $E(\PSG)$ contains the following edges:
    \begin{itemize}
    \item $(x_{v,i}, a_v, x_{v, i})$ for every $v \in V$ and
      $i \in \{1,2,3\}$
    \item $(y_v, a_v, y_v)$ for every $v \in V$
    \item $(z_{e,i}, b_e, z_{e, i})$ for every $v \in V$ and
      $i \in \{1,2,3\}$
    \end{itemize}
  \item $F(\PSG)$ contains the non-edges below:
    \begin{itemize}
    \item $(x_{v, i}, a_u, x_{v, i})$ for $u,v \in V$ s.t.~$u \neq v$,
      and for $i \in \{1,2,3\}$
    \item $(x_{v, i}, a_v, x_{v, j})$ for every $v \in V$ and
      $i, j \in \{1,2,3\}$ s.t.~$i\neq j$
    \item $(y_v, a_u, y_v)$ for every $u,v \in V $ s.t.~$u\neq v$
    \item $(y_v, b_e, y_v)$ for every $v \in V$ and $e \in E$ s.t.~
      $v$ is one of the two ends of $e$
    \item $(z_{e, i}, a_u, z_{e, i})$ for every $u \in V$ and
      $e \in E$ s.t.~$u$ is not one of the two ends of $e$
    \item $(z_{e,i}, a_v, x_{v, j})$ and $(x_{v, j}, a_v, z_{e, i})$
      for every $v \in V$, $e \in E$, and $i,j \in \{1,2,3\}$ s.t.~$v$
      is one of the two ends of $e$ but $i \neq j$
    \end{itemize}
  \end{itemize}
  A few observations can be made about and instantiation $G$ of \PSG:
  \begin{enumerate}
  \item $x_{v,i}$ represents the option of coloring vertex $v$ of
    $H$ by color $i$. When $u \neq v$ or $i \neq j$,
    $x_{u,i} \not\equiv_G x_{v,j}$. Thus the vertices $x_{v,i}$ in $G$
    belong to $3 \times |V|$ distinct equivalence classes.
  \item $y_v$ may belong to equivalence class $[x_{u, i}]_{\equiv_G}$
    only when $u = v$.  When $y_v \equiv_G x_{v, i}$, vertex $v$ in
    $H$ is assigned the color $i$.
  \item Suppose $e = uv$. Then $z_{e,i}$ may belong to either
    $[x_{u, i}]_{\equiv_G}$ or $[x_{v,j}]_{\equiv_G}$ (but not other
    equivalence classes of the form $[x_{w, k}]_{\equiv_G}$).
    Yet $z_{e,i} \not\equiv_G y_v$ and $z_{e, i} \not\equiv_G y_u$.
    Therefore, $z_{e,i}$ can only belong to either
    $[x_{u, i}]_{\equiv_G}$ or $[x_{v,j}]_{\equiv_G}$ if
    the same color $i$ is not assigned to the two
    ends $u$ and $v$ of $e$.
  \end{enumerate}
  
  Suppose $\pi : V \rightarrow \{1,2,3\}$ is a 3-coloring of $H$.  We
  can instantiate \PSG to obtain a digraph $G$, such that
  $y_v \equiv_G x_{v, i}$ iff $\pi(v) = i$. Since $\pi$ is a
  3-coloring, the instantiation $G$ can be chosen in such a way that
  for every edge $e = uv \in E$, either $z_{e, i} \equiv_G x_{u,i}$ or
  $z_{e,i} \equiv_G x_{v, i}$.  Thus $\equiv_G$ induces no more than
  $3 \times |V|$ equivalence classes.

  Conversely, suppose \PSG has an instantiation $G$ with no more than
  $3\times |V|$ equivalence classes. Since each $x_{v,i}$ belongs to a
  distinct equivalence class, there are exactly $3\times |V|$
  equivalence classes. Each $y_v$ belongs to some equivalence class
  $[x_{v,i}]_{\equiv_G}$. Define a color assignment
  $\pi : V \rightarrow \{1,2,3\}$ such that $\pi(v) = i$ iff
  $y_v \in [x_{v,i}]_{\equiv_G}$.  For every $e = uv \in E$, either
  $z_{e, i} \equiv_G x_{u,i}$ or $z_{e, i} \equiv_G x_{v,i}$.  Thus,
  either $\pi(u) \neq i$ or $\pi(v) \neq i$. Therefore, $\pi$ is a
  3-coloring of $H$.
\end{proof}

\section{Policy Mining for Domain and Type Enforcement}
\label{sec:DTE}

We pointed out that IoT devices play both the roles of subject and
object.  In our treatment of domain-based policies, an entity (a
device) is assigned to a domain based on both its characteristics as a
subject and its characteristics as an object.  Generalizing
domain-based policies, the \Dfn{domain and type enforcement} scheme
\cite{Badger-etal:1995} assigns two labels to each entity.  The
\Dfn{domain} label classifies entities based on the kind of access
requests they are allowed to make as subjects, while the \Dfn{type}
label classifies entities according to the kind of accesses they are
allowed to accept as objects.  As we shall see below (Observation
\ref{obs-domain-type} (\ref{obs-domain-type-smaller})), this dual
classification could potentially lead to a smaller number of labels
than one would have needed in an equivalent domain-based policy.  The
result is a cleaner and simpler policy specification.  In this
section, we study the mining of domain and type enforcement policies
from incomplete access logs.


To facilitate the visualization of proof arguments (for
Theorem \ref{thm-DB-is-hard} and \ref{thm-DTEPM-is-hard}), we shift
our representation of the access control matrix from a digraph to its
adjacency matrix.
\begin{definition}
  Suppose we fix the set $\Sigma$ of access rights.
  \begin{itemize}
  \item A digraph $G$ with a vertex set \Entities can be represented
    by an $|\Entities| \times |\Sigma| \times |\Entities|$ boolean
    matrix $M$ so that $(u,a,v) \in E(G)$ iff $M[u,a,v] = 1$.  To
    simplify nomenclature, we refer to $M$ as an \Dfn{access
      control matrix} for the entity set \Entities.
  \item The indistinguishability relation $\equiv$ over 
    \Entities (from Def.~\ref{def:indistinct}) can be easily
    reformulated in terms of access control matrices.
  \item A partially specified digraph $\PSG$ with vertex set \Entities
    can be represented by an
    $|\Entities| \times |\Sigma| \times |\Entities|$ matrix \PSM so
    that $\PSM[u,a,v]$ is $1$ if $(u,a,v) \in E(G)$, $0$ if
    $(u,a,v) \in F(G)$, and $*$ otherwise. We call \PSM a
    \Dfn{partially specified access control matrix} for the entity set
    \Entities.
  \item $M$ is an \Dfn{instantiation} of \PSM iff $M$ can be obtained
    from \PSM by replacing each $*$-entry by either a $0$ or a $1$.
  \end{itemize}
\end{definition}
The following 
formalizes a domain and type enforcement policy.
\begin{definition}
  Suppose $E$ is a set of entities.
  \begin{itemize}
  \item A \Dfn{domain and type enforcement policy} for \Entities is a
    5-tuple $\DTP = (D, T, \delta, \tau, \TBL)$, where $D$ is a finite
    set of \Dfn{domains}, $T$ is a finite set of \Dfn{types},
    $\delta : E \rightarrow D$ is a \Dfn{domain assignment},
    $\tau : E \rightarrow T$ is a \Dfn{type assignment}, and \TBL is a
    $|D| \times |\Sigma| \times |T|$ boolean matrix.  The intention is
    that the policy \DTP grants access request $(u, a, v)$ iff
    $\TBL[\delta(u), a, \tau(v)] = 1$.
  \item A domain and type enforcement policy
    $\DTP = (D, T, \delta, \tau, \TBL)$ \Dfn{enforces} access control
    matrix $M$ whenever the following holds: for every $u,v \in E$ and
    $a \in \Sigma$, $M[u,a,v] = 1$ iff
    $\TBL[\delta(u), a, \tau(v)] = 1$.
  \end{itemize}
\end{definition}
When we attempt to mine a domain and type enforcement policy from
incomplete access logs (\PSM), the mining algorithm can exercise
discretion on whether to interpret a ``don't care'' entry (``$*$'') as
a $1$ or a $0$. We desire an instantiation that leads to the smallest
number of domains and/or types. The decision form of this optimization
problem is the following.

\textbf{Domain and Type Enforcement Policy Mining (DTEPM)}
\begin{itemize}
\item \textbf{Instance:} A positive integer $m$, a set $\Sigma$ of
  access rights, a set \Entities of entities, and a partially
  specified access control matrix \PSM.
\item \textbf{Question:} Is there an instantiation $M$ of \PSM and
  a domain and type enforcement policy
  $\DTP = (D, T, \delta, \tau, \TBL)$ such that
  (a) \DTP enforces $M$ and (b) $\max(|D|,|T|) \leq m$?
\end{itemize}

It turns out that DTEPM is NP-complete (Theorem
\ref{thm-DTEPM-is-hard}).  To establish this result, we adopt a
methodology similar to the study of DBPM: we define equivalence
relations to capture the abstraction of subjects into
domains and the abstraction of objects into types.

\begin{definition}
  Suppose $M$ is an access control matrix for some entity set
  \Entities.
  \begin{itemize}
  \item Given entities $u,v \in \Entities$, we write $u \eqsub v$
    whenever $M[u,a, \linebreak x] = M[v,a,x]$ for every
    $x \in \Entities$ and $a \in \Sigma$.  Symmetrically, we write
    $u \eqobj v$ whenever $M[x,a,u] = M[x,a,v]$ for every
    $x \in \Entities$ and $a \in \Sigma$. We call the two equivalence
    relations \eqsub and \eqobj the \Dfn{row equivalence} and the
    \Dfn{column equivalence} respectively.
  \item We write $r(M)$ to denote the number of equivalence classes
    induced by \eqsub, $c(M)$ to denote the number of equivalence
    classes induced by \eqobj, and $e(M)$ to denote the number
    of equivalence classes induced by $\equiv$.
  \end{itemize}
\end{definition}
If $u \eqsub v$, then as subjects, $u$ and $v$ are allowed to perform
the same set of accesses. Conversely, when $u \eqobj v$, the same set
of accesses can be performed against $u$ and $v$ as objects.  Here are
some observations regarding the above definition.
\begin{observation} \label{obs-domain-type}
  Suppose \Entities is a set of entities,
  $M$ is an access control matrix for \Entities, and
  $\DTP = (D, T, \delta, \tau, \TBL)$ is a domain and type enforcement
  policy that enforces $M$.
  \begin{enumerate}
  \item Like $\equiv$, both $u \eqsub v$ and $u \eqobj v$ can be
    checked in time linear to $|\Entities|$ (assuming $|\Sigma|$ is a
    constant).
  \item $u \equiv v$ iff both $u \eqsub v$ and $u \eqobj v$.
    Thus $r(M) \leq e(M)$ and $c(M) \leq e(M)$. \label{obs-domain-type-smaller}
  \item If $u \not \eqsub v$, then it must be the case that
    $\delta(u) \neq \delta(v)$, or else \DTP would not enforce
    $(E, M)$.  In other words, $|D| \geq r(M)$. \label{obs-domain-type-r}
  \item Applying the same logic to columns, $|T| \geq c(M)$. \label{obs-domain-type-c}
  \item \DTP can be formulated so that $|D| = r(M)$ and $|T| = c(M)$.
    Specifically, we assign a different domain to each
    \eqsub-equiv\-a\-lence class, and a different type to each
    \eqobj-equivalence class. The function $\delta$ (resp.~$\tau$) can
    then be defined to map each entity to the domain (resp.~type) of
    its corresponding \eqsub-equiv\-a\-lence class
    (resp.~\eqobj-equivalence class). By (\ref{obs-domain-type-r}) and
    (\ref{obs-domain-type-c}) above, \DTP enforces $M$ so that
    $\max(|D|, |T|)$ is minimized. \label{obs-domain-type-min-max}
  \end{enumerate}
\end{observation}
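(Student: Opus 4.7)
The proposal is to prove the five clauses in order, as each leans on the previous.

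For clause~(1), the plan is simply to read off the running time from the definition of \eqsub: to test $u \eqsub v$, one iterates over all $x \in \Entities$ and $a \in \Sigma$, comparing $M[u,a,x]$ with $M[v,a,x]$, for a total of $|\Entities|\cdot|\Sigma|$ comparisons. When $|\Sigma|$ is a constant, this is $O(|\Entities|)$. The case of \eqobj is symmetric.

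Clause~(2) is the only one requiring some care, because Definition~\ref{def:indistinct} separates out the self/cross edges among $\{u,v\}$ (condition~1) from the edges to other vertices (condition~2), while \eqsub and \eqobj treat every column/row uniformly. For the forward direction, assume $u \equiv v$ and verify $M[u,a,x] = M[v,a,x]$ for each $x$: if $x \notin \{u,v\}$ this is condition~2a, while if $x \in \{u,v\}$ it follows from condition~1 which forces the four entries $M[u,a,u]$, $M[u,a,v]$, $M[v,a,u]$, $M[v,a,v]$ to agree. The \eqobj half is analogous using condition~2b. For the converse, the nontrivial step is rebuilding condition~1 from $u \eqsub v$ together with $u \eqobj v$: applying \eqsub with $x = u$ and then $x = v$ gives $M[u,a,u] = M[v,a,u]$ and $M[u,a,v] = M[v,a,v]$, and applying \eqobj with $x = u$ gives $M[u,a,u] = M[u,a,v]$, so all four entries coincide. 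The inequalities $r(M) \leq e(M)$ and $c(M) \leq e(M)$ then follow because the \eqsub-classes (respectively \eqobj-classes) are unions of $\equiv$-classes.

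For clauses~(3) and~(4), the approach is a contrapositive witness argument. If $u \not\eqsub v$, pick $a \in \Sigma$ and $x \in \Entities$ with $M[u,a,x] \neq M[v,a,x]$. Since \DTP enforces $M$, the entries $\TBL[\delta(u),a,\tau(x)]$ and $\TBL[\delta(v),a,\tau(x)]$ differ as well, which forces $\delta(u) \neq \delta(v)$. Hence $\delta$ is constant on each \eqsub-class and distinct \eqsub-classes get distinct images, giving $|D| \geq r(M)$. Clause~(4) is the mirror image with $\eqobj$, $\tau$, and $T$.

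For clause~(5), the plan is an explicit construction: take $D = \Entities / {\eqsub}$ and $T = \Entities / {\eqobj}$, set $\delta(u) = [u]_{\eqsub}$ and $\tau(u) = [u]_{\eqobj}$, and define $\TBL([u]_{\eqsub}, a, [v]_{\eqobj}) = M[u,a,v]$. The one substantive check is well-definedness of \TBL: if $u \eqsub u'$ and $v \eqobj v'$, then $M[u,a,v] = M[u',a,v] = M[u',a,v']$ by two applications of the respective equivalences. Enforcement is immediate by construction, and $|D| = r(M)$, $|T| = c(M)$ by definition; combined with (3) and (4), these equalities minimize $\max(|D|,|T|)$. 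The only real obstacle in the whole observation is the bookkeeping in clause~(2) around the four self/cross edges in condition~1 of Definition~\ref{def:indistinct}; everything else is a direct unpacking of definitions.
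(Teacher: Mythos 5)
Your proof is correct and follows essentially the same route the paper takes: the paper offers no separate proof of this Observation, treating each clause as an immediate consequence of the definitions with the justifications embedded inline, and your write-up simply fills in the details — in particular, your reconciliation of condition~1 of Definition~\ref{def:indistinct} with the uniform row/column comparisons of \eqsub and \eqobj in clause~(2) is exactly the bookkeeping the paper leaves implicit. One minor inaccuracy that does not affect the conclusion: in clause~(3), the assertion that $\delta$ is constant on each \eqsub-class does not follow (an enforcing policy may split an equivalence class across several domains) and is not needed, since $|D| \geq r(M)$ already follows from the fact that entities in distinct \eqsub-classes must receive distinct domains.
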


The proof of the NP-hardness of DTEPM involves a reduction from the
decision problem below.

\textbf{Domain Bounding (DB)}
\begin{itemize}
\item \textbf{Instance:} A positive integer $m$, a set $\Sigma$ of
  access rights, a set \Entities of entities, and a partially
  specified access control matrix \PSM for \Entities.
\item \textbf{Question:} Does there exist an instantiation $M$ of
  \PSM such that $r(M) \leq m$?
\end{itemize}

\begin{theorem} \label{thm-DB-is-hard}
  DB is NP-complete. This is true even if the
  set $\Sigma$ of access rights is restricted to be a singleton set.
\end{theorem}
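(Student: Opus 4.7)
The plan is to show membership in NP directly and then to give a polynomial-time reduction from Graph 3-Colorability \cite[p.~191]{Garey-Johnson:1979} to DB, with the reduction producing an instance in which $|\Sigma|=1$; this establishes both hardness and the stronger ``even if'' clause in one stroke. Membership is routine: a nondeterministic algorithm guesses an instantiation $M$ of \PSM, computes the \eqsub-classes in polynomial time by Observation~\ref{obs-domain-type}(1), and checks $r(M)\le m$.

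For hardness, given an undirected graph $H=(V,E)$, the reduction outputs the DB instance with $m=3$, $\Sigma=\{a\}$, entity set $\Entities = V \cup \{\,c_e \mid e \in E\,\}$ (one fresh ``column entity'' per edge), and a partial matrix \PSM specified as follows: for each edge $e=\{u,v\}\in E$, fix an arbitrary orientation and set $\PSM[u,a,c_e]=1$ and $\PSM[v,a,c_e]=0$; every other entry of \PSM is left as $*$. The construction is clearly polynomial in $|V|+|E|$. I would then prove the equivalence: $H$ is 3-colorable iff \PSM has an instantiation $M$ with $r(M)\le 3$.

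For the ($\Rightarrow$) direction, starting from a proper 3-coloring $\chi:V\to\{1,2,3\}$, I would construct $M$ so that each vertex row depends only on its color and each dummy row $c_e$ is copied from a fixed non-empty color class (say class~$1$). Concretely, for every column $c_{e'}$ with $e'=\{u',v'\}$, set the value in row $w$ to $1$ if $\chi(w)=\chi(u')$, to $0$ if $\chi(w)=\chi(v')$, and to $0$ otherwise; since $\chi(u')\neq\chi(v')$ this is unambiguous, and it is consistent with the fixed bits $\PSM[u',a,c_{e'}]=1$ and $\PSM[v',a,c_{e'}]=0$. Columns indexed by entities in $V$, along with the $c_e$-rows, are then filled to be identical to the chosen template of color class~$1$. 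The resulting $M$ has rows determined by $\chi$ alone, so $r(M)\le 3$. For the ($\Leftarrow$) direction, given an instantiation $M$ with $r(M)\le 3$, the map sending each $v\in V$ to its \eqsub-class is a proper coloring, because for every edge $e=\{u,v\}$ the rows of $u$ and $v$ disagree at column $c_e$; this uses at most $3$ colors.

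The main obstacle I anticipate is ensuring that the dummy column entities $c_e$ do not introduce extra \eqsub-classes beyond the $m$ classes corresponding to colors, which would break the ($\Rightarrow$) direction by forcing $r(M)\ge m+1$. The remedy above---leaving every $\PSM[c_e,a,\cdot]$ entry as $*$ and absorbing each $c_e$ into an existing color class---handles this cleanly, but one must double-check that the coherence requirement within a single color class never conflicts with the fixed bits, which is exactly where the hypothesis that $\chi$ is \emph{proper} (so no edge has both endpoints in the same class) gets used.
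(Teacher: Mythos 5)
Your proof is correct, and your reduction is a genuinely different (and leaner) construction than the paper's, although both reduce from Graph 3-Colorability with a singleton $\Sigma$ and so both establish the strengthened claim. The paper builds a $(3+|V|+3|E|)$-entity instance containing three explicit \emph{color rows}, forced pairwise distinct by an identity pattern on the first three columns, plus a triple of \emph{edge rows} per edge carrying the patterns $(1,0)$, $(0,1)$, $(0,0)$ at the two endpoint columns; since those three rows must biject onto the three color rows, no color row can carry $(1,1)$ at an edge's endpoints, which is what forbids monochromatic edges. You instead use only $|V|+|E|$ entities and encode each edge constraint directly: the fresh column $c_e$ with hard-coded $1$/$0$ at the two endpoints forces $u \not\eqsub v$ outright, so any instantiation with $r(M)\le 3$ immediately yields a proper coloring, and conversely a proper coloring yields an instantiation whose rows take at most three template values. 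What the paper's gadget buys is an explicit, named correspondence between row classes and colors (every non-color row is literally forced to coincide with one of the three color rows); what yours buys is a smaller instance and a shorter correctness argument, at the cost of the one subtlety you already identified and handled --- that the dummy rows $c_e$ and the all-$*$ columns indexed by $V$ must be absorbable without creating a fourth class, which your template construction guarantees (even when some color class of vertices is empty, the rows of $M$ still take at most three distinct values). The NP-membership argument is the same routine guess-and-check in both.
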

\begin{proof}
  Membership in NP is obvious.  We prove NP-hardness by a reduction
  from Graph 3-Colorability \cite[p.~191]{Garey-Johnson:1979} to DB.

  Given undirected graph $G = (V, E)$, we assume, without loss of
  generality, the vertices are $1, 2, \ldots, |V|$, and the edges are
  numbered $1, 2, \ldots, |E|$, such that the $i$'th edge is $u_iv_i$
  where $u_i < v_i$. We denote the three colors by $1$, $2$ and $3$.
  
  The reduction generates from $G$ a DB instance
  $(3, \Sigma, \Entities, \PSM)$, where $\Sigma$ is a singleton set,
  \Entities is a set of size $n = 3 + |V| + 3 \times |E|$, and \PSM is
  a partially specified access control matrix of size
  $n \times 1 \times n$.  To simplify presentation, we treat \PSM and
  its instantiations as 2-dimensional matrices of size $n \times n$ in
  the rest of this proof.

  The rightmost $3 \times |E|$ columns of \PSM are padded with $0$'s.
  We specify below the remaining entries of \PSM.  More specifically,
  there are three types of row in \PSM.

  Rows $1$ to $3$ are the \Dfn{color rows}, so that row $i$
  representing color $i$.  Specifically, $\PSM[i, i] = 1$,
  $\PSM[i, j] = 0$ for $j = \{1,2,3\} \setminus \{i\}$, and
  $\PSM[i, j] = *$ for $j \in \{ 4, 5, \ldots, 3+|V| \}$.

  Rows $4$ to $(3+|V|)$ are the \Dfn{vertex rows}.  Specifically, row
  $(3+v)$ is the row for vertex $v \in V$, such that
  $\PSM[3+v, 3+v] = 1$, and $\PSM[i,j] = *$ for
  $j \in \{ 1, 2, \ldots, 3 + |V| \} \setminus \{ 3+v \}$.

  There are three \Dfn{edge rows} for each undirected edge in $E$.
  Let $\mathit{base}(i)$ be $3+|V|+3(i-1)$.  The edge rows for the
  $i$'th edge $u_iv_i$ are located at rows $\mathit{base}(i) + 1$,
  $\mathit{base}(i) + 2$, and $\mathit{base}(i) + 3$.  For each of the
  three edge rows of $u_iv_i$, in columns $1$ to $(3+|V|)$, except for
  columns $u_i$ and $v_i$, all entries are $*$.  In row
  $\mathit{base}(i) + 1$, columns $u_i$ and $v_i$ store $1$ and $0$
  respectively. In row $\mathit{base}(i) + 2$, columns $u_i$ and $v_i$
  store $0$ and $1$ respectively.  In row $\mathit{base}(i) + 3$,
  columns $u_i$ and $v_i$ store $0$ and $0$ respectively.

  We now demonstrate that $G$ is 3-colorable iff
  $(3, \Sigma, \Entities, \PSM)$ is a positive instance of DB.

  $(\Rightarrow)$ Suppose $\pi : V \rightarrow \{ 1, 2, 3 \}$ is a
  3-coloring of $G$.  We construct an instantiation $M$ of \PSM.  For
  each color $i$, we instantiate the corresponding color row by
  setting $M[i, 3+v] = 1$ if $\pi(v) = i$ and $M[i,3+v] = 0$
  otherwise. Note that the three color rows in $M$ are pairwise
  distinct.  For each vertex $v$, we instantiate the corresponding
  vertex row to the color row $i$ of $M$ if $\pi(v) = i$.  For the
  $i$'th edge $u_iv_i$, we instantiate row $\mathit{base}(i) +1$ to
  the color row $\pi(u_i)$ of $M$, row $\mathit{base}(i) +2$ to the
  color row $\pi(v_i)$ of $M$, and row $\mathit{base}(i) + 3$ to the
  remaining color row of $M$. Therefore,
  $(3, \Sigma, \Entities, \PSM)$ is a positive instance of DB.

  $(\Leftarrow)$ Suppose \PSM has an instantiation $M$ such that all
  rows belong to one of three distinct bit patterns.  Observe that the
  three color rows in \PSM have distinct bit patterns in columns $1$
  to $3$. So the three color rows in $M$ must be the three distinct
  bit patterns to which all other rows in $M$ conform.  Observe also
  that the three edge rows for the $i$'th edge $u_iv_i$ have distinct
  bit patterns at columns $u_i$ and $v_i$. Thus the three edge rows
  must have a one-one correspondence to the three color rows in $M$.

  We construct a color assignment $\pi : V \rightarrow \{ 1, 2, 3 \}$
  as follows. For each isolated vertex $v$ in $G$, set $\pi(v) = 1$.
  Then, for each color $i$, set $\pi(v) = i$ if $M[i, 3+v] = 1$ and
  $v$ is not an isolated vertex in $G$. We argue that (a) every vertex
  receives exactly one color assignment, and (b) no adjacent vertices
  receive the same color assignment. Thus $\pi$ is a 3-coloring for
  $G$.

  To see (a), consider vertex $u$. Either $u$ is an isolated vertex or
  $u$ has a neighbor. In the former case, $\pi(u) = 1$. In the latter
  case, let $v$ be a neighbor $u$. Among the three edge rows for
  edge $uv$, exactly one of them has a $1$ in column $u$. Thus $u$
  receives exactly one color assignment. To see (b), suppose adjacent
  vertices $u$ and $v$ have the same color. Then there is a color row
  with a $1$ in both column $u$ and column $v$, which in turn means
  the same bit pattern occurs in one of the three edge rows for
  $uv$. This is impossible by construction.
\end{proof}
We now return to the main result of this section.
\begin{theorem} \label{thm-DTEPM-is-hard}
  DTEPM is NP-complete. This is true even if $\Sigma$ contains
  only one access right.
\end{theorem}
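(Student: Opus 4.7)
The plan is to establish DTEPM's membership in NP directly and then obtain NP-hardness by adapting the 3-Colorability reduction used to prove Theorem~\ref{thm-DB-is-hard}.

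For the upper bound, I would invoke Observation~\ref{obs-domain-type}(\ref{obs-domain-type-min-max}), which implies that the minimum value of $\max(|D|,|T|)$ over all DTE policies enforcing $M$ equals $\max(r(M),c(M))$. A nondeterministic machine therefore guesses an instantiation $M$ of \PSM, computes $r(M)$ and $c(M)$ in polynomial time from the relations \eqsub and \eqobj, and accepts iff $\max(r(M),c(M)) \le m$.

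For NP-hardness I propose a reduction that reuses the entity set and the row-constraints from the DB reduction of Theorem~\ref{thm-DB-is-hard}, with the following critical modification: instead of padding the rightmost $3|E|$ (``edge'') columns with $0$'s, leave every such entry as $*$. The DTEPM bound is set to $m=3$, and $|\Sigma|=1$ is preserved. The backward direction is inherited almost for free: an instantiation with $\max(r(M),c(M)) \le 3$ satisfies $r(M) \le 3$, and the backward argument from the DB proof then extracts a 3-coloring of the input graph $G$ from the three color-row classes. For the forward direction, starting from a 3-coloring $\pi$, I would instantiate $M$ exactly as in the DB proof so that $r(M)=3$, and then use the freed edge-column entries to make each edge column $e_{i,s}$ become \eqobj-equivalent to the color column $c_{\alpha_{i,s}}$, where $\alpha_{i,s}$ is the color inherited by edge row $e_{i,s}$ (namely $\alpha_{i,1}=\pi(u_i)$, $\alpha_{i,2}=\pi(v_i)$, and $\alpha_{i,3}$ the remaining color, as forced by the DB analysis). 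Concretely, I set $M[c_l,e_{i,s}] = 1$ iff $l=\alpha_{i,s}$, $M[3+v,e_{i,s}] = 1$ iff $\pi(v)=\alpha_{i,s}$, and $M[e_{i',s'},e_{i,s}] = 1$ iff $\alpha_{i',s'}=\alpha_{i,s}$; because vertex columns already collapse onto color columns in the DB instantiation, every non-color column of $M$ is then \eqobj-equivalent to some color column, giving $c(M) \le 3$ and hence $\max(r(M),c(M)) \le 3$.

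The main obstacle will be checking the consistency of the forward-direction assignment---that the values imposed on edge columns by color-alignment simultaneously agree with the pre-specified edge-row entries (the $(1,0)$, $(0,1)$, $(0,0)$ patterns at columns $u_i,v_i$) and with the row equivalences underlying $r(M)=3$. This reduces to the observation, implicit in the DB proof, that $\alpha_{i,s}$ is the unique color whose pattern at columns $u_i, v_i$ matches $e_{i,s}$'s specified pattern, so every value demanded of an edge column by color-alignment coincides with the value the corresponding row takes after being merged with its color row; the argument is therefore a consistent extension of the DB construction rather than a genuinely new combinatorial step.
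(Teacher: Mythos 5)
Your proposal is correct, but the NP-hardness argument takes a genuinely different route from the paper's. The paper does not go back to Graph 3-Colorability; it reduces from DB itself (already shown NP-complete in Theorem~\ref{thm-DB-is-hard}) by a generic padding construction: it embeds the DB matrix \PSM as a block of a larger matrix $\PSM'$ alongside a fixed fully-specified $(2n+1)\times(2n+1)$ gadget $M^\star$ chosen so that $r(M') = r(M) + 2n+1$ while $c(M') = c(M) + n + 1$. Since $r(M') > c(M')$ for every instantiation, $\max(r(M'),c(M'))$ always equals $r(M)+2n+1$, and the DTEPM bound $m+2n+1$ tracks the DB bound $m$ exactly; the column count is rendered irrelevant. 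You instead modify the 3-Colorability gadget of Theorem~\ref{thm-DB-is-hard} so that the column count can itself be driven down to $3$: by converting the $0$-padded edge columns to $*$ and instantiating each edge column to coincide with the color column of the color its row inherits, every non-color column becomes \eqobj-equivalent to a color column, giving $c(M)=3=r(M)$ in the forward direction, while the backward direction only needs $r(M)\le 3$ and is untouched. I checked your forward-direction column analysis (vertex column $3+v$ matches color column $c_{\pi(v)}$, edge column $e_{i,s}$ matches $c_{\alpha_{i,s}}$, and the three color columns remain pairwise distinct) and it is consistent with the pre-specified entries, so the reduction is sound. The trade-off: the paper's reduction is modular and reusable --- a ``row-domination'' padding that works for any DB instance without reopening the gadget --- whereas yours is a single self-contained reduction that avoids the intermediate problem but obliges you to re-verify the column-equivalence structure of the 3-Colorability gadget. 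Your NP-membership argument via Observation~\ref{obs-domain-type}(\ref{obs-domain-type-min-max}) is the argument the paper leaves implicit as ``obvious.''
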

\begin{proof}
  Membership in NP is obvious.  We demonstrate NP-hard\-ness by a
  reduction from the single access right version of DB.  Given DB
  instance $(m, \Sigma, \Entities, \PSM)$, where $|\Sigma| = 1$ and
  $|\Entities| = n$, our reduction constructs an DTEPM instance
  $(m+2n+1, \Sigma, \Entities', \PSM')$, where \Entities has a size of
  $(3n+1)$ and $\PSM'$ is the following $(3n+1) \times (3n+1)$ partially
  specified access control matrix (again, we treat \PSM, $\PSM'$, and
  their instantiations as 2-dimensional matrices since $\Sigma$ is a
  singleton set):
  \[
    \PSM' =
    \begin{bmatrix}
      \PSM & \textbf{0} \\
      \textbf{0}^T & M^\star
    \end{bmatrix}
  \]
  Here, $\textbf{0}$ is an $n\times (2n+1)$ zero matrix and
  $M^\star$ is the following $(2n+1)\times (2n+1)$ access control matrix.
  \begin{equation*}
    M^\star = 
    \begin{bmatrix}
      1 & 0 & 0 & \dots & 0 & 0 & 1 \\
      0 & 1 & 0 & \dots & 0 & 1 & 0 \\
      0 & 0 & \ddots & \dots & \reflectbox{$\ddots$} & 0 & 0 \\
      \vdots & \vdots & 0 & 1 & 0 & \vdots & \vdots \\
      0 & 0 & 1 & \dots & 1 & 0 & 0 \\        
      0 & 1 & 1 & \dots & 1 & 1 & 0 \\        
      1 & 1 & 1 & \dots & 1 & 1 & 1
    \end{bmatrix}
  \end{equation*}
  Suppose $M$ is an instantiation of \PSM, and $M'$ is the
  corresponding instantiation of $\PSM'$.  Then $r(M') = r(M) + 2n+1$,
  but $c(M') = c(M) + n + 1$.  Thus $r(M') > c(M')$, and
  $\max(r(M'), c(M')) = r(M') = r(M) + 2n+1$.
  Thus $(m, \Sigma, \Entities, \PSM)$ is a positive instance of DB iff
  $(m+2n+1, \Sigma, \Entities', \PSM')$ is a positive instance of
  DTEPM.
\end{proof}

%
%

\begin{table*}[t]
\centering
\resizebox{\textwidth}{!}{%
\begin{tabular}{|lllllllll|}
  \hline
  & \multicolumn{3}{l}{\textbf{Boolean Variables:}}
  & \hspace{0.5cm}
  &
  & \hspace{0.5cm}
  &
  &  \\
  & $x_{i,a,j} \in \{ 0, 1 \}$
  & \hspace{0.5cm}
  &
    \makecell[tl]{$\forall a \in [k]$, $\forall i, j \in [n]$ \\
    for which $\PSM[i,a,j] = *$
    }
  & \hspace{0.5cm}
  &
  & \hspace{0.5cm}
  &
    \makecell[tl]{--- $x_{i,a,j}$ is the boolean instantiation
    of the \\ ``don't care''-entry $\PSM[i,a,j]$. Together these \\variables
  encode an instantiation $M$ of \PSM.
  }
  & \\
  & $y_{i,p} \in \{ 0, 1 \}$
  & \hspace{0.5cm}
  & $\forall i \in [n]$, $\forall p \in [m]$
  & \hspace{0.5cm}
  &
  & \hspace{0.5cm}
  & \makecell[tl]{--- $y_{i,p}$ indicates whether entity $i$
    is assigned to \\ class $p$. Together these
  variables encode a \\protection domain assignment 
  $\pi : [n] \rightarrow [m]$.}
  &  \\
  & $z_{p,a,q} \in \{ 0, 1 \}$
  & \hspace{0.5cm}
  & $\forall a \in [k]$, $\forall p, q \in [m]$
  & \hspace{0.5cm}
  &
  & \hspace{0.5cm}
  & \makecell[tl]{--- $z_{p,a,q}$ indicates whether entities in class
    $p$ may \\
  exercise access right $a$ over entities in class $q$. \\
  Together these variables specify a digraph $H$\\ whose
  vertices are the classes.}
  &  \\
  & $r_{p} \in \{ 0, 1 \}$
  & \hspace{0.5cm}
  & $\forall p \in [m]$
  & \hspace{0.5cm}
  &
  & \hspace{0.5cm}
  & \makecell[tl]{--- $r_p$ indicates whether class $p$ is occupied.}
  &  \\
  & \multicolumn{3}{l}{}
  & \hspace{0.5cm}
  &
  & \hspace{0.5cm}
  &
  &  \\
  & \multicolumn{3}{l}{\textbf{Hard Clauses:}}
  & \hspace{0.5cm}
  &
  & \hspace{0.5cm}
  &
  &  \\
  & $\bigvee\limits_{p \in [m]} y_{i,p}$
  & \hspace{0.5cm}
  & $\forall i \in [n]$
  & \hspace{0.5cm}
  & (1)
  & \hspace{0.5cm}
  & \makecell[tl]{$\pi$ maps each entity to at least $1$ class.}
  &  \\
  & $(\neg y_{i,p} \lor \neg y_{i,q})$
  & \hspace{0.5cm}
  & $\forall i \in [n]$, $\forall p,q \in [m]$, $p < q$
  & \hspace{0.5cm}
  & (2)
  & \hspace{0.5cm}
  & \makecell[tl]{$\pi$ maps each entity to at most $1$ class.} & \\
  & \multicolumn{3}{l}{\makecell[tl]{
    $\forall a \in [k]$, $\forall i, j \in [n]$,
    $\forall p, q \in [m]$:}}
  & \hspace{0.5cm}
  &
  & \hspace{0.5cm}
  & \makecell[tl]{$\pi$ is a strong homomorphism.}
  &  \\
  & \hspace{0.5cm}$(\neg y_{i,p} \lor \neg y_{j,q} \lor \neg
    z_{p,a,q})$
  & \hspace{0.5cm}
  & if $M[i,a,j] = 0$
  & \hspace{0.5cm}
  & (3)
  & \hspace{0.5cm}
  & \makecell[tl]{$z_{p,a,q} \implies 0$}
  &  \\
  & \hspace{0.5cm}$(\neg y_{i,p} \lor \neg y_{j,q} \lor z_{p,a,q})$
  & \hspace{0.5cm}
  & if $M[i,a,j] = 1$
  & \hspace{0.5cm}
  & (4)
  & \hspace{0.5cm}
  & \makecell[tl]{$1 \implies z_{p,a,q}$}
  &  \\
  & \hspace{0.5cm}\makecell[tl]{
    $(\neg y_{i,p} \lor \neg y_{j,q} \lor
    x_{i,a,j} \lor \neg z_{p,a,q})$}
  & \hspace{0.5cm}
  & if $M[i,a,j] = *$
  & \hspace{0.5cm}
  & (5)
  & \hspace{0.5cm}
  & \makecell[tl]{$z_{p,a,q} \implies x_{i,a,j}$}
  &  \\
  & \hspace{0.5cm}\makecell[tl]{$(\neg y_{i,p} \lor \neg y_{j,q} \lor
    \neg x_{i,a,j} \lor z_{p,a,q})$}
  & \hspace{0.5cm}
  & if $M[i,a,j] = *$
  & \hspace{0.5cm}
  & (6)
  & \hspace{0.5cm}
  & \makecell[tl]{$x_{i,a,j} \implies z_{p,a,q}$}
  &  \\
  & $(\neg y_{i,p} \lor r_{p})$
  & \hspace{0.5cm}
  & $\forall i \in [n]$, $\forall p \in [m]$
  & \hspace{0.5cm}
  & (7)
  & \hspace{0.5cm}
  & \makecell[tl]{if $\pi(i) = p$, then class $p$ is occupied.}
  &  \\
  & \multicolumn{3}{l}{}
  & \hspace{0.5cm}
  &
  & \hspace{0.5cm}
  &
  &  \\
  & \multicolumn{3}{l}{\textbf{Soft Clauses:}}
  & \hspace{0.5cm}
  &
  & \hspace{0.5cm}
  &
  &  \\
  & $(\neg r_{p})$
  & \hspace{0.5cm}
  & $\forall p \in [m]$
  & \hspace{0.5cm}
  & (8)
  & \hspace{0.5cm}
  & Maximize the number of unoccupied classes. &  \\
  \hline
\end{tabular}
}%
\caption{The baseline encoding of DBPM}
\label{tab:maxsat_form}
\end{table*}

\section{MaxSAT Encoding of DBPM}

\label{sec:encoding}

Even though DBPM is NP-complete, in practice we still want to be able
to perform mining. In this section, we devise MaxSAT encodings for
DBPM, so that a DBPM instance can be solved by an off-the-shelf MaxSAT
solver. 

\Dfn{MaxSAT} \cite[Ch.~24]{HSAT2021} is an optimization version of SAT
in which a truth assignment is sought for a given CNF formula so
that the number of satisfied clauses is maximized.  \Dfn{Partial
  MaxSAT} is a (generalized) variant of MaxSAT in which each clause is
either a hard clause or a soft clause.  The optimization objective is
to find a truth assignment that satisfies all the hard clauses but
maximizes the number of satisfied soft clauses.

\subsection{Input Instance, Parameters, and Notations}
\label{sec:instance}

Suppose we are given a partially specified access control matrix \PSM,
which is a $n \times k \times n$ matrix, where $k$ is the number of
access rights and $n$ is the number of entities.  Each entry is either
a boolean value ($0$ or $1$) or a wildcard symbol (`$*$').  Our goal
is to find an instantiation $M$ of \PSM such that the number of
$\equiv$-equivalence classes is minimized.

Our assumption below is that we already know a certain upper bound for
the number of equivalence classes in the optimal instantiation $M$. We
denote that upper bound by $m$.  Our MaxSAT encodings create $m$
placeholders for equivalence classes, and attempt to assign
indistinguishable entities into the same placeholder. For brevity, we
call such a placeholder a ``\Dfn{class}.''

Given positive integer $x$, we write $[x]$ to denote
$\{1, \ldots, x\}$.

\subsection{The Baseline Encoding}
Table \ref{tab:maxsat_form} shows a partial MaxSAT encoding for the
optimization version of DBPM. We dub this the \Dfn{baseline encoding
  (BE)}.  The boolean variables capture a number of choices: (a) an
instantiation $M$ of \PSM (represented by $x_{i,a,j}$), (b) a
domain-based policy consisting of a protection domain assignment $\pi$
(represented by $y_{i,p}$) and a digraph $H$ (represented by
$z_{p,a,q}$), and (c) a subset of classes known as ``occupied''
classes (represented by $r_p$).  Hard clauses (1) and (2) ensure that
$\pi$ is a function.  Hard clauses (3)--(7) ensure that $\pi$ is a
strong homomorphism.  Hard clause (8) identifies the ``occupied''
classes to be those in the range of $\pi$.  The optimization objective
is to minimize the number of ``occupied'' classes.

\subsection{Optimizing the ``At Most One'' Constraint}
\label{sec:at-most-one}

\setcounter{equation}{8} 

Hard clause (2) ensures that the function $\pi$ has \emph{at most one}
image for each input value.  The following are two alternative ways to
encode the ``at most one'' constraint. These two optimization ideas
are mutually exclusive: only one can be applied at a time.

\textbf{1. Cardinality Constraint Optimization (CC).}
Hard clauses (1) and (2) encode the following
\Dfn{cardinality constraint} \cite[Ch.~28]{HSAT2021}, in which
boolean variables are interpreted as integers ($0$ or $1$):
\begin{equation}
  \left(\sum_{p \in [m]} y_{i,p} = 1\right)
\end{equation}
(1) and (2) together form the so called \Dfn{pairwise encoding} of
cardinality constraints, in which (2) expands to $O(nm^2)$
clauses. More efficient encodings of cardinality constraints
exist. For example, in our experiments (\S \ref{sec:evaluation}), we
use the ladder encoding \cite[Ch.~2]{HSAT2021}, which introduces only
linearly many new clauses at the expense of linearly many new
variables.

\textbf{2. Non-Functional Optimization (NF).}  Impose only hard clause
(1) and \emph{omit} hard clause (2) in the baseline encoding.
Effectively, we allow each $i$ to be assigned to more than one class.
Let us suppose $i$ gets assigned to multiple classes.  Then the strong
homomorphism constraints will ensure that the entities assigned to
these classes are equivalent ($\equiv$).  The optimization objective
will thus prefer ``merging'' these classes rather than leaving them
separate. The result of the optimization is that equivalent entities
will get assigned to the same class. So we get the effect of imposing
hard clause (2) without needing to add any extra constraints, thus
resulting in faster performance.  (This optimization idea is inspired
by the multivalued direct encoding \cite{Selman-etal:1992} and minimal
support encoding \cite{Argelich-etal:2008} of Constraint Satisfaction
Problems as SAT instances.)

\subsection{Symmetry Breaking: Permutations}
With the baseline encoding, the solution space has lots of symmetries.
Suppose there is an optimal solution in which $\pi$ maps the entities
to $s$ of the $m$ classes. Then permuting the $s$ occupied classes,
and adjusting $\pi$ accordingly, will produce another optimal solution.
Similarly, if we select a different $s$-subset of the $m$ classes to
be the range of $\pi$, and adjust $\pi$ accordingly, then we obtain
another optimal solution.  Thus there is a total of
$s! \times \binom{m}{s}$ symmetries to each optimal solution.  In this
subsection, we outline an optimization technique for removing the
symmetries arisen from the permutation of the occupied classes,
thereby reducing the the number of symmetries by a factor of $s!$.
(The next subsection considers the symmetries arisen from
$s$-subsets.)

The idea of this optimization technique is to force the MaxSAT solver
to search for one specific permutation of the $s$ occupied classes,
rather than considering all of them.  For each class $p$, let $\min_p$
be smallest entity index assigned to class $p$.  Essentially, we look
for a mapping $\pi$ such that the $\min_p$ values for the occupied
classes $p$ are sorted: i.e., $\min_p < \min_q$ iff $p < q$. To
realize this idea, we devised two encodings. They are mutually
exclusive: only one can be applied at a time.

We call the first encoding \Dfn{Feasible Mins (FM)},
which introduces the
following variables and clauses.

\textbf{Boolean Variables:}
\begin{itemize}
\item $l_{i,p}$ for $i \in [n]$ and $p \in [m]$.
  The variable $l_{i,p}$ asserts that entity $i$ is
  the lowest-indexed entity that has been mapped by $\pi$ to
  class $p$.
\end{itemize}

\textbf{Hard Clauses:} 
\begin{itemize}
\item \emph{Mins are sorted.} For $p,q \in [m]$ such that $p < q$,
  and for $i,j \in [n]$ such that $j \leq i$,
  \begin{equation} \label{eqn-mins-sorted-2}
    (\lnot l_{i,p} \lor \lnot l_{j,q})
  \end{equation}

\item \emph{Min is minimum.} For $i,j \in [n]$ such that $i < j$, and
  for $p \in [m]$,
  \begin{equation} \label{eqn-min-is-minimum-2}
    (\lnot y_{i,p} \lor \lnot l_{j,p})
  \end{equation}

\item \emph{Min is selected.} For $i \in [n]$ and $p \in [m]$,
  \begin{equation} \label{eqn-min-is-selected-2}
    (\lnot l_{i,p} \lor y_{i,p})
  \end{equation}

\item \emph{Feasible mins.} For $i \in [n]$ and $p \in [m]$,
  \begin{equation} \label{eqn-feasible-mins-2} 
    (\lnot y_{i,p} \lor l_{1,p} \lor l_{2,p} \lor \cdots \lor l_{i,p})
  \end{equation}
\end{itemize}

A second encoding for eliminating permutation symmetries is called
\Dfn{Min Domain (MD)}, which is obtained from FM by replacing hard
clause \eqref{eqn-feasible-mins-2} by the following hard clause.
\begin{itemize}
\item \emph{Min domain.} For every $p \in [m]$,
  \begin{equation} \label{eqn-min-domain-2}
    (\lnot r_p \lor l_{1,p} \lor l_{2,p} \lor \cdots \lor l_{n,p})
  \end{equation}
\end{itemize}

MD generates significantly fewer clauses than FM.  Specifically,
\eqref{eqn-feasible-mins-2} in FM is expanded into $n\times m$
clauses, each of size $O(n)$, whereas \eqref{eqn-min-domain-2} in MD
corresponds to $m$ clauses, each of size $O(n)$.  However, FM
propagates constraints more directly than the MD.


\subsection{Symmetry Breaking: Combinations}

An additional optimization can be performed to eliminate the
symmetries arising from the choice of an $s$-subset of classes to be
used as the range of function $\pi$.  Doing so reduces the number of
symmetries by a factor of $\binom{m}{s}$.

The optimization idea is actually quite simple: we prefer to assign
entities to classes with lower indices before assigning entities to
higher-indexed classes. More specifically, we only assign entities to
those classes with indices $1, 2, \ldots, s$.  Of course, we do not
know beforehand what $s$ is, as that is up to the MaxSAT solver to
figure out.  This idea is realized in an encoding called
\Dfn{Lower-Indexed Classes (LI)}, which imposes the following hard
clauses.
\begin{itemize}
\item \emph{Prefer lower-indexed classes.}  For $p \in [m-1]$,
  \begin{equation}
    (r_p \lor \lnot r_{p+1})
  \end{equation}
\end{itemize}
The clause above requires that if class $p$ is unoccupied, then class
$p+1$ must be unoccupied as well. The effect is that classes with
lower indices will be occupied first.

\section{Experimental Evaluation} \label{sec:evaluation}

An experiment was conducted with the goals of: (1) evaluating the
relative performance of the encodings BE, CC, NF, FM, MD, and LI, as
well as their combinations; (2) assessing if MaxSAT solving is
adequate for solving DBPM in practice.

\subsection{Experiment Setup}

The experiments were performed on one node of a high-performance
computing cluster.  The node has $80$ Intel Xeon Gold 6148 @ 2.40GHz
CPUs and runs Rocky Linux 8.7 with $3000$GB memory.  The experiments
were run as single-node jobs and each job allocated $1$ CPU with
$256$GB memory.  The RC2 algorithm \cite{Ignatiev2019} was used for
solving MaxSAT problems.  RC2 participated in the MaxSAT Evaluations
2018 and 2019, and it was ranked first in two complete categories:
unweighted and weighted.  The implementation was provided by the PySAT
\cite{Pysat2018} library.


\subsection{Benchmarking Method}

We benchmarked the performance of 6 encodings: (i) BE, (ii) BE+CC,
(iii) BE+NF, (iv) BE+NF+FM, (v) BE+NF+MD, and (vi) BE+NF+MD+LI. To
that end, we created a benchmark suite consisting of a collection
of DBPM instances.

Given parameters $m^*$ and $n$, we generated a DBPM instance using the
following steps: (1) Randomly generate a digraph $H$ with $m^*$
vertices. $|\Sigma|$ is set to $1$. Each triple $(u,a,v)$ has a $0.5$
probability of being an edge.  (2) Generate a digraph $G$ with $n$
vertices and a protection domain assignment $\pi$.  This is achieved
by evenly assigning the $n$ vertices of $G$ to the $m^*$ vertices of
$H$, and then replicating the edges of $H$ in $G$ to ensure $\pi$ is a
strong homomorphism. The mapping $\pi$ is then discarded. (3) Generate
a partially specified access control matrix \PSM.  To do that, $G$ is
first turned into an access control matrix $M$.  Then we sample 10\%
of the matrix entries to be turned into ``don't care'' ($*$), thereby
obtaining \PSM.

For each $m^* \in \{2, 4, 6, 8, 10\}$, and for each
$n \in \{100, 200, \ldots, 1000\}$, we generated $6$ DBPM instances,
resulting in a total of $300$ DBPM instances as our benchmark suite.
We then applied each of the 6 encodings to solve each of the DBPM
instances.  We used $m = 2\times m^*$ as the estimated upper bound for
the number of equivalence classes (\S \ref{sec:instance}).  We also
set a timeout limit of $5$ minutes: i.e., we terminated the solving of
an instance by an encoding when that instance could not be solved by
that encoding within the timeout limit. When an instance could be
solved by an encoding within the timeout limit, we measured the
execution time using the \texttt{time()} function from the time module
of Python.

\begin{figure*}
\centering
\begin{tikzpicture}
\begin{axis}[
  ylabel=Execution time (seconds),
  xlabel=Number of instances,
  width= 0.9*\textwidth,
  height= 0.32*\textwidth,
  xmin=0,
  xmax=300,
  ymin=0,
  ymax=20000,
  scaled y ticks = false,
  grid=both,
  legend entries={
                  BE,
                  BE+CC,
                  BE+NF,
                  BE+NF+FM,
                  BE+NF+MD,
                  BE+NF+MD+LI},
  every axis legend/.append style={at={(0, 1)}, anchor=north west, outer xsep=5pt, outer ysep=5pt, legend columns=3, nodes={scale=1.0}}
    ]
\addplot[mark=|, color=black] table {BE.csv};
\addplot[mark=x, color=brown] table {BE_CC.csv};
\addplot[mark=o, color=violet] table {BE_NF.csv};
\addplot[mark=|, color=cyan] table {BE_NF_FM.csv};
\addplot[mark=x, color=blue] table {BE_NF_MD.csv};
\addplot[mark=o, color=red] table {BE_NF_MD_LI.csv};
\end{axis}
\end{tikzpicture}
\caption{Cactus plot for the results.\label{fig:cactus}}
\end{figure*}
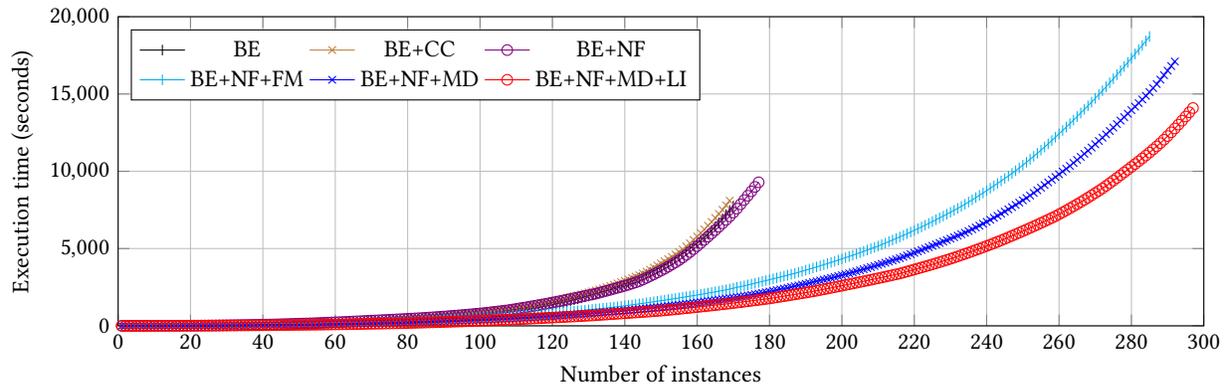

\begin{table}
  \begin{tabular}{|l|c|c|} \hline
    \textbf{Encoding}
    & \textbf{Instances Solved}
    & \textbf{Total Time (sec)} \\ \hline \hline
    BE          & 170 & 7,746   \\ \hline
    BE+CC       & 169 & 8,097   \\ \hline
    BE+NF       & 177 & 9,286   \\ \hline
    BE+NF+FM    & 285 & 18,719  \\ \hline
    BE+NF+MD    & 292 & 17,103  \\ \hline
    BE+NF+MD+LI & 297 & 14,092  \\ \hline
  \end{tabular}
  \caption{Number of instances solved and total execution time for the 6
  encodings\label{tbl:results-in-table}}
\end{table}

\subsection{Results}

Table \ref{tbl:results-in-table} lists, for each encoding, the number
of DBPM instances solved within the timeout limit and the total
execution time to solve all those instances.  Figure \ref{fig:cactus}
visualizes the results in a \Dfn{cactus plot}, which is now the
standard way by which the SAT community depicts experimental results
\cite{Brain2017}.
The X-axis is the number of solved instances, and the Y-axis is the
total execution time (in seconds) to solve those instances. More
specifically, suppose an encoding solved $N$ of the $300$ DBPM
instances. We sorted the execution time of the $N$ instances:
$t_1 \leq t_2 \leq \cdots \leq t_N$.  Then we computed
$T_i = t_1 + \cdots t_2 + \cdots + t_i$. Lastly, we plotted
$(T_i, i)$, for $1 \leq i \leq N$, to depict the performance of that
encoding. In other words, a point $(T_i, i)$ in the cactus plot tells
us that the encoding in question was able to solve $i$ instances
within $T_i$ seconds. We repeated this for all the $6$ encodings to
obtain Figure \ref{fig:cactus}.

There are three takeaways from the results.
\begin{enumerate}
\item \textbf{The most efficient encoding is BE+NF+MD+LI.}  This
  encoding combines the optimization of the ``at most one'' constraint
  via NF, the breaking
  of permutation symmetries via MD, and the breaking
  of combination symmetries via LI. It demonstrates that
  each of the three types of optimization is important in the
  crafting of a MaxSAT encoding for DBPM.
\item \textbf{Symmetry breaking produces a more noticeable speed\-up
    than optimizing the ``at most one'' constraint.}  While BE+NF
  solved more instances than BE, significant performance improvement
  came from the use of FM, MD, and LI.  Comparing the two ways to
  optimize ``at least one'', NF has a slight performance advantage
  over CC. Note that BE+NF generates fewer clauses than BE and BE+CC.
\item \textbf{MD is more efficient than FM in breaking permutation
    symmetries.}  This speedup is likely due to MD generating a much
  smaller number of clauses than FM ($m$ clauses vs $m\times n$
  clauses).
\end{enumerate}

\section{Related Work} \label{sec:related-work}
\subsection{Graph Homomorphism}
Given directed graphs $G$ and $H$, a homomorphism from $G$ to
$H$ is a function $h : V(G) \rightarrow V(H)$ so that
$(u, v) \in E(G)$ implies $(h(u), h(v)) \in E(H)$.  While strong
homomorphism (\S \ref{sec:algorithm}) preserves both adjacency and
non-adjacency, homomorphism preserves only adjacency.  The
body of literature behind the study of graph homomorphism is
surveyed by Hell and Ne\v{s}et\v{r}il \cite{Hell2004}.  There are
clear parallels between the theory we developed in \S
\ref{sec:algorithm} concerning digraph strong homomorphism and the
theory of graph homomorphism. In the following, we point out the
parallels and identify the novelty of our contributions.

In the theory of graph homomorphism \cite[\S 1.6 and \S
2.8]{Hell2004}, a \Dfn{retraction} is a homomorphism $h$ from a
directed graph $G$ to a proper subgraph $H$ of $G$, so that $h(v) = v$
for $v \in V(H)$.  In this case, $H$ is called a \Dfn{retract} of
$G$. A directed graph $G$ that has no retraction is called a
\Dfn{core}. A core is unique up to isomorphism.

The theory of strong homomorphism as developed in \S
\ref{sec:algorithm} bears clear parallels to the theory of retraction
and core. Given digraphs $G$ and $H$, a surjective strong homomorphism
is analogous to a retraction, and a summary is analogous to a core.
Nevertheless, digraph strong homomorphism and graph homomorphism are
distinct concepts. We contrast the two along three dimensions: (a)
logical relationship, (b) complexity, and (c) characterization.

(a) Logically, every strong homomorphism is a homomorphism, but not vice
versa. Therefore, every core is a summary, but not every summary is a
core.

(b) Complexity-wise, core recognition is in general
$\mathit{NP}$-complete \cite{Hell1992}, but the recognition and
construction of summary are both in $\mathit{P}$ (Algorithm
\ref{algo:offline}).  We conjecture that testing for surjective strong
homomorphism is in $\mathit{NPI}$, which contains problems that are
commonly believed to be harder than those in $\mathit{P}$ but not as
hard as $\mathit{NP}$-complete problems. More specifically, we
conjecture that surjective strong homomorphism testing is
$\mathit{GI}$-complete, meaning the former problem is equivalent
in hardness to graph isomorphism testing.
In contrast, testing graph
homomorphism is $\mathit{NP}$-complete \cite{Hell2004}.  All this
suggests that graph homomorphism and digraph strong homomorphism are
two distinct computational phenomena.

(c) Another novelty is our characterization of digraph summary using
an equivalence relation (i.e., indistinguishability).  This
characterization is the foundation of the efficient design of
\textsc{Summarize} (Algorithm \ref{algo:offline}).
No analogous characterization is available for graph homomorphism. That explains the difference in complexity between core recognition and summary recognition/construction.

\subsection{Role Mining and RBAC Policy Analysis}

The advantages of Role-Based Access Control (RBAC) depend on the
complex structure of role assignment, role authorization, and role
permissions.  Role Mining (RM) \cite{Vaidya2007} is an automatic
method to find appropriate roles for organizational needs.
Intuitively, a traditional ACM can be cast as a corresponding RBAC
policy by using matrix decomposition.  User-Permission Assignments
($\mathit{UPA}$), User-role Assignments ($\mathit{UA}$), and
role-Permission Assignments ($\mathit{PA}$) can be represented as
Boolean matrices.  Finding the appropriate set of roles is equivalent
to decomposing $\mathit{UPA}$ into two suitable Boolean matrices:
$\mathit{UA}$ and $\mathit{PA}$.

Mitra et al. \cite{Mitra2016} investigated and analyzed the related
literature in terms of RM.  It proved that the RM problem and its
variants are all NP-hard problems.  For example, let $m$ be the number
of users, $n$ be the number of permissions, and $r$ be the number of
roles, then an $\mathit{UA}$ is a matrix of size $m \times r$, a
$\mathit{PA}$ is a matrix of size $r \times n$ and an $\mathit{UPA}$
is a matrix of size $m \times n$.  The Edge RM problem aims to
minimize $|\mathit{UA}| + |\mathit{PA}|$, where an $\mathit{UA}$ and a
$\mathit{PA}$ are decomposed from the given $\mathit{UPA}$;
$|\mathit{UA}|$ is the total number of roles assigned to all the users
and $|\mathit{PA}|$ is the total number of permissions included in all
the roles.  Xu and Scott \cite{Xu2015} also proved the Attribute-Based
Access Control (ABAC) policy mining problem is NP-hard by reducing
Edge RM to it.

Wickramaarachchi \emph{et al.}~\cite{Wick2009} encoded the User Authorization
Query (UAQ) problem to MaxSAT, and demonstrated their SAT-solving
approach is effective in solving the UAQ problem.

Chen and Crampton \cite{Chen2009} proved that a variation of the set
cover problem, namely, the minimal cover problem is NP-hard. They then
explored the connections between the Inter-Domain Role Mapping (IDRM)
problem and the minimal cover problem.  They proved that the exact
IDRM decision problem is NP-complete, and the exact IDRM optimization
problem is also NP-hard.

Benedetti and Mori \cite{Benedetti2018} encoded their Role Maintenance
problem to MaxSAT and solved it by both complete and incomplete MaxSAT
solvers. A solver is said to be complete when it finds the optimal
solution (if one exists).  A solver is incomplete when it quickly
finds a good enough but possibly suboptimal solution.

Crampton et al. \cite{Crampton2022} proposed the Generalized Noise
Role Mining (GNRM) problem.  Let $r$ be the number of columns in
$\mathit{UA}$ and the number of rows in $\mathit{PA}$.  Let $k$ be the
Hamming distance between $\mathit{UPA}$ and the composition of
$\mathit{UA}$ and $\mathit{PA}$.  They proved that GNRM is
fixed-parameter tractable with parameter $r + k$.  They also expressed
the GNRM problem as a constraint satisfaction problem and solved it by
the CP-SAT solver.

Our work differs from existing work in the following ways: (1) DBPM
(resp.~DTEPM) concerns protection domains (resp.~domains and types)
rather than roles.  (2) We characterized digraph summary in terms of
strong homomorphism and the indistinguishability relation.  Role
mining, however, can be characterized by matrix decomposition.  (3) We
explored different encodings of DBPM in order to speedup MaxSAT
solving.  Such sophisticated optimization techniques were not
attempted in previous work in the literature of RBAC policy analysis.

\section{Conclusion and Future Work} \label{sec:conclusion}

This work studies the mining of domain-based policies from access
logs.  We began by characterizing domain-based policy mining as the
search for digraph summaries.  We demonstrated that, with incomplete
access logs, the general problem of mining domain-based policies can
be seen as a graph sandwich problem, resulting in NP-completeness.
Our experimental results, however, suggest that it is possible to mine
domain-based policies relatively efficiently using MaxSAT solvers,
especially when sophisticated optimization techniques are employed.

We are interested in a number of future work.  (1) In this work, we
interpreted Occam's Razor to mean the minimization of domains.  In
future work, we would like to consider alternative optimization
objectives.  For example, if we know that a small number of ``don't
care'' entries will turn into boolean values in the future, but we do
not know what boolean values they will turn into, how do we construct
domain-based policies that are resilient to future gain of knowledge?
(2) Rather than employing MaxSAT solvers for mining domain-based
policies, we want to investigate if other constrained-optimization
techniques would yield better performance.  (3) We aspire to design
efficient approximation algorithms for solving the optimization
version of DBPM.

\bibliographystyle{abbrv}
\bibliography{references}

\begin{thebibliography}{10}

\bibitem{Argelich-etal:2008}
J.~Argelich, A.~Cabiscol, I.~Lynce, and F.~Manyà.
\newblock Encoding {Max-CSP} into partial {Max-SAT}.
\newblock In {\em Proceedings of the 38th International Symposium on Multiple
  Valued Logic (ISMVL 2008)}, pages 106--111, 2008.

\bibitem{Badger-etal:1995}
L.~Badger, D.~F. Sterne, D.~L. Sherman, K.~M. Walker, and S.~A. Haghighat.
\newblock Practical domain and type enforcement for {UNIX}.
\newblock In {\em Proceedings 1995 IEEE Symposium on Security and Privacy
  (S\&P'95)}, pages 66--77, 1995.

\bibitem{Baldwin1990}
R.~W. Baldwin.
\newblock Naming and grouping privileges to simplify security management in
  large databases.
\newblock In {\em Proceedings of S\&P}, pages 116--132, 1990.

\bibitem{Barker:2009}
S.~Barker.
\newblock The next 700 access control models or a unifying meta-model?
\newblock In {\em Proceedings of the 14th ACM Symposium on Access Control
  Models and Technologies}, SACMAT '09, page 187–196, 2009.

\bibitem{Benedetti2018}
M.~Benedetti and M.~Mori.
\newblock Parametric {RBAC} maintenance via {Max-SAT}.
\newblock In {\em Proceedings of {SACMAT}}, pages 15--25. {ACM}, 2018.

\bibitem{HSAT2021}
A.~Biere, M.~Heule, H.~van Maaren, and T.~Walsh, editors.
\newblock {\em Handbook of Satisfiability - Second Edition}, volume 336 of {\em
  Frontiers in Artificial Intelligence and Applications}.
\newblock {IOS} Press, 2021.

\bibitem{Brain2017}
M.~Brain, J.~H. Davenport, and A.~Griggio.
\newblock Benchmarking solvers, {SAT}-style.
\newblock In {\em Proceedings of {IWSCSC} co-located with {ISSAC}}, volume 1974
  of {\em {CEUR} Workshop Proceedings}. CEUR-WS.org, 2017.

\bibitem{Chen2009}
L.~Chen and J.~Crampton.
\newblock Set covering problems in role-based access control.
\newblock In {\em Proceedings of {ESORICS}}, volume 5789 of {\em LNCS}, pages
  689--704. Springer, 2009.

\bibitem{Cormen2009}
T.~H. Cormen, C.~E. Leiserson, R.~L. Rivest, and C.~Stein.
\newblock {\em Introduction to Algorithms, 3rd Edition}.
\newblock {MIT} Press, 2009.

\bibitem{Crampton2022}
J.~Crampton, E.~Eiben, G.~Z. Gutin, D.~Karapetyan, and D.~Majumdar.
\newblock Generalized noise role mining.
\newblock In {\em Proceedings of {SACMAT}}, pages 91--102, 2022.

\bibitem{Fiege2002C}
L.~Fiege, M.~Mezini, G.~M{\"{u}}hl, and A.~P. Buchmann.
\newblock Engineering event-based systems with scopes.
\newblock In {\em Proceedings of {ECOOP}}, volume 2374 of {\em LNCS}, pages
  309--333. Springer, 2002.

\bibitem{Fiege2002J}
L.~Fiege, G.~M{\"{u}}hl, and F.~C. G{\"{a}}rtner.
\newblock Modular event-based systems.
\newblock {\em Knowl. Eng. Rev.}, 17(4):359--388, 2002.

\bibitem{Fong2011}
P.~W.~L. Fong.
\newblock Relationship-based access control: protection model and policy
  language.
\newblock In {\em Proceedings of {CODASPY}}, pages 191--202, 2011.

\bibitem{Carranza2019}
J.~C. Fuentes~Carranza and P.~W.~L. Fong.
\newblock Brokering policies and execution monitors for {IoT} middleware.
\newblock In {\em Proceedings of {SACMAT}}, pages 49--60, 2019.

\bibitem{Garey-Johnson:1979}
M.~R. Garey and D.~S. Johnson.
\newblock {\em Computers and Intractability}.
\newblock Freeman, 1979.

\bibitem{Golumbic1995}
M.~C. Golumbic, H.~Kaplan, and R.~Shamir.
\newblock Graph sandwich problems.
\newblock {\em J. Algorithms}, 19(3):449--473, 1995.

\bibitem{Graham1972}
G.~S. Graham and P.~J. Denning.
\newblock Protection: principles and practice.
\newblock In {\em Proceedings of {AFIPS} Spring Joint Computer Conference},
  volume~40 of {\em {AFIPS} Conference Proceedings}, pages 417--429. {AFIPS},
  1972.

\bibitem{Hell1992}
P.~Hell and J.~Ne\v{s}et\v{r}il.
\newblock The core of a graph.
\newblock {\em Discret. Math.}, 109(1-3):117--126, 1992.

\bibitem{Hell2004}
P.~Hell and J.~Ne\v{s}et\v{r}il.
\newblock {\em Graphs and homomorphisms}.
\newblock Oxford University Press, 2004.

\bibitem{Hu2015}
V.~C. Hu, D.~R. Kuhn, and D.~F. Ferraiolo.
\newblock Attribute-based access control.
\newblock {\em Computer}, 48(2):85--88, 2015.

\bibitem{Pysat2018}
A.~Ignatiev, A.~Morgado, and J.~Marques{-}Silva.
\newblock {PySAT:} {A} {Python} toolkit for prototyping with {SAT} oracles.
\newblock In {\em SAT}, pages 428--437, 2018.

\bibitem{Ignatiev2019}
A.~Ignatiev, A.~Morgado, and J.~Marques{-}Silva.
\newblock {RC2:} an efficient {MaxSAT} solver.
\newblock {\em J. Satisf. Boolean Model. Comput.}, 11(1):53--64, 2019.

\bibitem{Kearns1994}
M.~J. Kearns and U.~V. Vazirani.
\newblock {\em An Introduction to Computational Learning Theory}.
\newblock {MIT} Press, 1994.

\bibitem{Kuijper2014}
W.~Kuijper and V.~Ermolaev.
\newblock Sorting out role based access control.
\newblock In {\em Proceedings of {SACMAT}}, pages 63--74, 2014.

\bibitem{Lampson1974}
B.~W. Lampson.
\newblock Protection.
\newblock {\em {ACM} {SIGOPS} Oper. Syst. Rev.}, 8(1):18--24, 1974.

\bibitem{Ligatti2009}
J.~Ligatti, L.~Bauer, and D.~Walker.
\newblock Run-time enforcement of nonsafety policies.
\newblock {\em {ACM} Trans. Inf. Syst. Secur.}, 12(3):19:1--19:41, 2009.

\bibitem{Lipton1977}
R.~J. Lipton and L.~Snyder.
\newblock A linear time algorithm for deciding subject security.
\newblock {\em J. {ACM}}, 24(3):455--464, 1977.

\bibitem{Mitra2016}
B.~Mitra, S.~Sural, J.~Vaidya, and V.~Atluri.
\newblock A survey of role mining.
\newblock {\em {ACM} Comput. Surv.}, 48(4):50:1--50:37, 2016.

\bibitem{Sandhu1996}
R.~S. Sandhu, E.~J. Coyne, H.~L. Feinstein, and C.~E. Youman.
\newblock Role-based access control models.
\newblock {\em Computer}, 29(2):38--47, 1996.

\bibitem{Schneider2000}
F.~B. Schneider.
\newblock Enforceable security policies.
\newblock {\em {ACM} Trans. Inf. Syst. Secur.}, 3(1):30--50, 2000.

\bibitem{Selman-etal:1992}
B.~Selman, H.~Levesque, and D.~Mitchell.
\newblock A new method for solving hard satisfiability problems.
\newblock In {\em Proceedings of the 1992 AAAI Conference on Artificial
  Intelligence (AAAI'1992)}, pages 440--446, 1992.

\bibitem{Snyder1981}
L.~Snyder.
\newblock Theft and conspiracy in the take-grant protection model.
\newblock {\em J. Comput. Syst. Sci.}, 23(3):333--347, 1981.

\bibitem{Vaidya2007}
J.~Vaidya, V.~Atluri, and Q.~Guo.
\newblock The role mining problem: finding a minimal descriptive set of roles.
\newblock In {\em Proceedings of {SACMAT}}, pages 175--184, 2007.

\bibitem{Wick2009}
G.~T. Wickramaarachchi, W.~H. Qardaji, and N.~Li.
\newblock An efficient framework for user authorization queries in {RBAC}
  systems.
\newblock In {\em Proceedings of {SACMAT}}, pages 23--32, 2009.

\bibitem{Xu2015}
Z.~Xu and S.~D. Stoller.
\newblock Mining attribute-based access control policies.
\newblock {\em IEEE Trans. Dependable Secur. Comput.}, 12(5):533--545, 2015.

\end{thebibliography}

\appendix

\section{Lemmas and Proofs}
\label{app:proofs}

\subsection{Proof of Proposition \ref{prop:surjective}}
\label{app:surjective}

\begin{proof}
  Suppose $\pi$ is not surjective. Then there is a proper subset
  $U = \mathit{range}(\pi) \subset V(H)$ such that $G$ is strongly
  homomorphic to $H[U]$, contradicting condition (b) of
  Def.~\ref{def:summary}.
\end{proof}

\subsection{Lemmas about Summaries}
\label{app:summary}

The lemma below shows that every summary is isomorphic to an induced
subgraph of the original digraph.
\begin{lemma} \label{lem:isomorphic}
  If digraph $G$ is strongly homomorphic to digraph $H$ via a
  surjection, then $H$ is isomorphic to an induced subgraph $G[U]$ of
  $G$, where $U \subseteq V(G)$.
\end{lemma}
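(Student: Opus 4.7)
The plan is to build the isomorphism by picking a system of representatives for the preimages of $\pi$. Since $\pi : V(G) \to V(H)$ is surjective, for each $v \in V(H)$ the preimage $\pi^{-1}(v)$ is nonempty, so I can select one representative $u_v \in \pi^{-1}(v)$. Let $U = \{\, u_v \,\mid\, v \in V(H)\,\}$ and define $\phi : V(H) \to U$ by $\phi(v) = u_v$. By construction $\phi$ is a bijection between $V(H)$ and $U = V(G[U])$, with inverse $\pi\!\restriction_U$.

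The next step is to verify that $\phi$ is a graph isomorphism from $H$ to $G[U]$, i.e.\ that edges are preserved in both directions. For any $v,w \in V(H)$ and $a \in \Sigma$, I would chain two equivalences: first, because $\phi(v),\phi(w) \in U$, membership in $E(G[U])$ coincides with membership in $E(G)$, so $(\phi(v),a,\phi(w)) \in E(G[U])$ iff $(\phi(v),a,\phi(w)) \in E(G)$; second, because $\pi$ is a strong homomorphism and $\pi(\phi(v)) = v$, $\pi(\phi(w)) = w$, the definition of strong homomorphism yields $(\phi(v),a,\phi(w)) \in E(G)$ iff $(v,a,w) \in E(H)$. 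Composing the two gives exactly the edge-preservation condition required for an isomorphism.

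There is essentially no obstacle here; the only conceptual point is that the ``iff'' in the definition of strong homomorphism is doing the real work, since it lets edges and non-edges transfer in both directions between $G$ and $H$. If one only had an ordinary homomorphism, the backward direction would fail and the lemma would not hold. Because $V(G)$ is finite (digraphs in this paper are finite), the choice of representatives $u_v$ requires no appeal to the Axiom of Choice, and the entire argument is a short chain of definition-unfolding.
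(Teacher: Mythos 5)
Your proof is correct and follows essentially the same route as the paper's: both select one representative $u_v$ from each fibre $\pi^{-1}(v)$, set $U = \{u_v \mid v \in V(H)\}$, and observe that the restriction of $\pi$ to $U$ (equivalently, your inverse map $\phi$) is a bijection whose edge-preservation in both directions follows from the ``iff'' in the definition of strong homomorphism. The only difference is that you spell out the two-step chain through $E(G[U])$ and $E(G)$, which the paper compresses into a single sentence.
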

\begin{proof}
  Suppose the surjective function $\pi : V(G) \rightarrow V(H)$ is a
  strong homomorphism from $G$ to $H$.  For every $v \in V(H)$,
  arbitrarily select a vertex $u_v \in V(G)$ such that $\pi(u_v) = v$.
  Since $\pi$ is surjective, such a selection always exists.  Let
  $U = \{\, u_v \mid v \in V(H) \,\}$.  Since $\pi$ is a strong
  homomorphism, $G[U]$ is isomorphic to $H$ via the bijection
  $\pi|_U$, that is, the restriction of $\pi$ to $U$.
 \end{proof}

The following lemma provides alternative characterizations of
irreducible digraphs.
\begin{lemma}[Characterization of Irreducible Digraphs] \label{lem:char-irred-digraphs}
  Suppose $G$ is a finite digraph. The following statements are
  equivalent:
  \begin{enumerate}
  \item $G$ is irreducible.
  \item $G$ is not strongly homomorphic to any of its proper
    subgraphs.
  \item Every strong homomorphism from $G$ to another digraph is
    injective.
  \end{enumerate}
\end{lemma}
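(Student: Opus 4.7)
The plan is to establish the three-way equivalence by proving the two biconditionals $(1) \Leftrightarrow (2)$ and $(2) \Leftrightarrow (3)$, using Proposition \ref{prop:surjective}, Proposition \ref{prop:unique}, and Lemma \ref{lem:isomorphic} as the main tools.

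For $(1) \Rightarrow (2)$, I would argue by contradiction. Suppose $G$ is irreducible but is strongly homomorphic to some proper subgraph $G' \subsetneq G$ via $\pi$. By Proposition \ref{prop:unique} and the definition of irreducibility, $G$ has a summary $H$ with $H \cong G$ through some isomorphism $\phi : H \to G$. Pull $G'$ back through $\phi$ to obtain a proper subgraph $H' \subsetneq H$ isomorphic to $G'$; then $\phi^{-1} \circ \pi$ is a strong homomorphism from $G$ into $H'$, contradicting clause (b) of Definition \ref{def:summary}. For the converse $(2) \Rightarrow (1)$, let $H$ be any summary of $G$. By Proposition \ref{prop:surjective} the associated strong homomorphism $G \to H$ is surjective, and Lemma \ref{lem:isomorphic} then yields $H \cong G[U]$ for some $U \subseteq V(G)$. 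Hypothesis $(2)$ forces $G[U]$ not to be a proper subgraph, so $U = V(G)$ and $H \cong G$.

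For $(2) \Rightarrow (3)$, suppose some strong homomorphism $\pi : G \to H$ fails to be injective and set $H' = H[\pi(V(G))]$. Then $\pi : G \to H'$ is a surjective strong homomorphism, so by Lemma \ref{lem:isomorphic} $H' \cong G[U]$ for some $U \subseteq V(G)$ with $|U| = |\pi(V(G))| < |V(G)|$; composing with the isomorphism yields a strong homomorphism from $G$ into the proper subgraph $G[U]$, contradicting $(2)$. For $(3) \Rightarrow (2)$, suppose for contradiction that $G$ is strongly homomorphic to a proper subgraph $G'$ via $\pi$. By $(3)$, $\pi$ is injective, so $V(G') \subseteq V(G)$ together with injectivity forces $V(G') = V(G)$ and makes $\pi$ a bijection on $V(G)$. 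A vertex-bijective strong homomorphism is an isomorphism, so $|E(G)| = |E(G')|$; combined with $E(G') \subseteq E(G)$ this gives $E(G) = E(G')$, hence $G = G'$, contradicting properness.

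The one genuinely delicate step is $(1) \Rightarrow (2)$, where one must be careful to translate the hypothetical strong homomorphism into a map landing in a proper subgraph of the specific summary $H$ rather than of $G$ itself, since clause (b) of Definition \ref{def:summary} is stated relative to subgraphs of the summary. The other three implications reduce essentially to the combination of the surjectivity of strong homomorphisms into summaries (Proposition \ref{prop:surjective}) with the representation of surjective strong-homomorphism images as induced subgraphs (Lemma \ref{lem:isomorphic}), plus the elementary counting observation that a vertex-bijective strong homomorphism preserves edge sets.
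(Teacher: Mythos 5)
Your proof is correct and follows essentially the same route as the paper's: both rest on Proposition \ref{prop:surjective}, Lemma \ref{lem:isomorphic}, and clause (b) of Definition \ref{def:summary} applied to a summary of $G$. The only difference is organizational --- the paper closes the cycle $(1)\Rightarrow(2)\Rightarrow(3)\Rightarrow(1)$ in three implications (handling $(3)\Rightarrow(1)$ directly via a bijective strong homomorphism being an isomorphism), whereas you prove the two biconditionals separately, but the substance of each step matches the paper's argument.
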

\begin{proof}
  \emph{1) $\Rightarrow$ 2).}  Suppose $G$ is irreducible. Then $G$ is
  its own summary.  By Definition \ref{def:summary}, $G$ is not
  strongly homomorphic to a proper subgraph of its own summary,
  namely, $G$ itself.
  
  \emph{2) $\Rightarrow$ 3).}  Suppose $G$ is not strongly homomorphic
  to any of its proper subgraphs.  Consider a digraph $H$ and a strong
  homomorphism $h$ from $G$ to $H$. Let $W = \Ran{h}$. Then $h$ is a
  surjective strong homomorphism from $G$ to $H[W]$. By Lemma
  \ref{lem:isomorphic}, there is an isomorphism $g$ from $H[W]$ to an
  induced subgraph $G[U]$ of $G$, where $U \subseteq V(G)$.  Now,
  $g \circ h$ is a strong homomorphism from $G$ to its subgraph
  $G[U]$. $G[U]$ cannot be a proper subgraph of $G$, meaning
  $U = V(G)$. This can only be true because $h$ is injective.
  
  \emph{3) $\Rightarrow$ 1).}  Suppose every strong homomorphism from
  $G$ to another digraph is injective.  Suppose digraph $H$ is a
  summary of $G$. By Proposition \ref{prop:surjective}, there is a
  surjective strong homomorphism $h$ from $G$ to $H$.  But $h$
  has to be injective, meaning $h$ is an isomorphism.
\end{proof}

\subsection{Proof of Proposition \ref{prop:irreducible}}
\label{app:irreducible}

\begin{proof}
  \emph{($\Rightarrow$)} Suppose $g$ is a surjective strong
  homomorphism from $G$ to $H$ and $H$ is a summary of $G$.  We claim
  that $H$ is not strongly homomorphic to any of its proper
  subgraph. Suppose that is not the case, and there is a strong
  homomorphism $h$ from $H$ to a proper subgraph $H'$.  Then the
  function $h \circ g$ is a strong homomorphism from $G$ to $H'$,
  contradicting the fact that $H$ is a summary of $G$. The claim is
  therefore valid.  By Lemma \ref{lem:char-irred-digraphs}, $H$ is
  irreducible.
  
  \emph{($\Leftarrow$)} Suppose there is a surjective strong
  homomorphism from $G$ to $H$ and $H$ is irreducible.  Then by Lemma
  \ref{lem:isomorphic}, $H$ is isomorphic to an induced subgraph
  $G[U]$ of $G$, where $U \subseteq V(G)$. Let $g$ be this
  isomorphism.  We claim that there is no strong homomorphism from $G$
  to a proper subgraph of $H$. Suppose otherwise, and $h$ is a strong
  homomorphism from $G$ to a proper subgraph $H'$ of $H$. Then
  $h \circ g$ is a strong homomorphism from $H$ to one of its proper
  subgraphs, contradicting the irreducibility of $H$. Thus the claim
  holds, and $H$ is a summary of $G$.
\end{proof}

\subsection{Proof of Proposition \ref{prop:unique}}
\label{app:unique}

\begin{proof}
  Suppose digraphs $H$ and $H'$ are summaries of $G$. By Propositions
  \ref{prop:surjective} and \ref{prop:irreducible}, $H$ and $H'$ are
  both irreducible.  We show that $H$ and $H'$ are isomorphic to each
  other.

  Suppose $h$ and $h'$ are surjective strong homomorphisms from $G$ to
  $H$ and from $G$ to $H'$ respectively.  Lemma \ref{lem:isomorphic}
  implies there is a subset $U$ of $V(G)$ such that $H$ is
  isomorphic to $G[U]$.  Let $g$ be this isomorphism. Consider the
  function $f = h' \circ g$. Function $f$ is a strong homomorphism (as
  it is the composition of a strong homomorphism with an isomorphism).
  In addition, $f$ is injective by Lemma
  \ref{lem:char-irred-digraphs}.
  
  We claim that $f$ is surjective as well. We derive a contradiction
  in case the claim is not true.  Suppose $f$ is not surjective,
  meaning $\Ran{f} \subset V(H')$. Then $f \circ h$ is a strong
  homomorphism from $G$ to a proper subgraph of $H'$, contradicting
  the fact that $H'$ is a summary of $G$.
  
  In summary, $f$ is a strong homomorphism from $H$ to
  $H'$ that is both surjective and injective, meaning
   $f$ is an isomorphism.
\end{proof}

\subsection{Lemmas about Indistinguishability}
\label{app:indistinguishability}

The following lemma offers an alternative characterization of
indistinguishability that is more elegant than Def.~\ref{def:indistinct}.
\begin{lemma}
  \label{lem:indistinct} 
  Given a digraph $G$, two vertices $u$ and $v$ are indistinguishable
  ($u \equiv_G v$) if and only if the following conditions hold for
  every $a \in \Sigma$ and every $x \in V(G)$,
    \begin{align}
(u, a, x) \in E(G) & \iff (v, a, x) \in E(G)\label{eqn:indistinct:left}\\
(x, a, u) \in E(G) & \iff (x, a, v) \in E(G)\label{eqn:indistinct:right}
    \end{align}
\end{lemma}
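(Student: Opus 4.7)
The plan is to prove the biconditional by a direct case analysis, showing that the quantification over all $x \in V(G)$ in \eqref{eqn:indistinct:left} and \eqref{eqn:indistinct:right} is equivalent to the combination of Condition 1 (the ``four-edge'' condition on $\{u,v\}$ itself) and Condition 2 (the quantification restricted to $x \in V(G) \setminus \{u,v\}$) from Definition \ref{def:indistinct}. The only real content lies in relating the ``self-referential'' cases $x = u$ and $x = v$ to the four-edge clause.

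For the $(\Leftarrow)$ direction, I would assume \eqref{eqn:indistinct:left} and \eqref{eqn:indistinct:right} hold for every $a \in \Sigma$ and every $x \in V(G)$. Condition 2 of Def.~\ref{def:indistinct} is immediate by restricting $x$ to $V(G) \setminus \{u, v\}$. For Condition 1, fix $a \in \Sigma$ and chain three instances of the hypotheses: \eqref{eqn:indistinct:left} at $x = u$ gives $(u,a,u) \in E(G) \iff (v,a,u) \in E(G)$; \eqref{eqn:indistinct:right} at $x = u$ gives $(u,a,u) \in E(G) \iff (u,a,v) \in E(G)$; and \eqref{eqn:indistinct:left} at $x = v$ gives $(u,a,v) \in E(G) \iff (v,a,v) \in E(G)$. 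Composing these biconditionals shows that the four edges are either all in $E(G)$ or all absent, which is precisely Condition 1.

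For the $(\Rightarrow)$ direction, I would assume $u \equiv_G v$ in the sense of Def.~\ref{def:indistinct} and verify \eqref{eqn:indistinct:left} and \eqref{eqn:indistinct:right} for arbitrary $a \in \Sigma$ and $x \in V(G)$. When $x \notin \{u,v\}$, the claim is precisely Condition 2. When $x \in \{u, v\}$, Condition 1 tells us that the four edges $(u,a,u), (u,a,v), (v,a,u), (v,a,v)$ agree on membership in $E(G)$; from this, \eqref{eqn:indistinct:left} and \eqref{eqn:indistinct:right} reduce to biconditionals between pairs drawn from these four edges, all of which are trivially true.

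I do not anticipate any serious obstacle: the argument is a straightforward unpacking of definitions, and the only care needed is in enumerating the special cases $x = u$ and $x = v$ and matching each of them against the appropriate pair among the four edges listed in Condition 1. The result is essentially a restatement showing that Condition 1 is exactly what is needed to extend the quantifier in Condition 2 from $V(G) \setminus \{u, v\}$ to all of $V(G)$.
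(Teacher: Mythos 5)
Your proposal is correct and follows essentially the same route as the paper's proof: the ``only if'' direction is handled by the identical case split on whether $x \in \{u,v\}$, appealing to Condition 1 of Definition~\ref{def:indistinct} in the first case and Condition 2 in the second. The only difference is that you explicitly work out the ``if'' direction by chaining instances of \eqref{eqn:indistinct:left} and \eqref{eqn:indistinct:right} at $x=u$ and $x=v$ to recover the four-edge clause, whereas the paper dismisses that direction as obvious; your chaining argument is valid.
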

\begin{proof}
  The ``if'' direction is obvious. We demonstrate the ``only if''
  direction below.

  Suppose $u \equiv_G v$. Consider $a \in \Sigma$ and $x \in
  V(G)$. There are two cases. 
  \begin{itemize}
  \item \textbf{Case $x \in \{u,v\}$.} Then Condition 1 of Definition
    \ref{def:indistinct} ensures that the following edges either all
    belong to $E(G)$ or all absent from $E(G)$:
    $(u, a, u)$, $(u, a, v)$, $(v, a, u)$, $(v, a, v)$.
    Thus the following statements hold:
    \begin{itemize}
    \item $(u, a, x) \in E(G)$ iff $(v, a, x) \in E(G)$, and
    \item $(x, a, u) \in E(G)$ iff $(x, a, v) \in E(G)$.
    \end{itemize}
  \item \textbf{Case $x \not\in \{u,v\}$.}  Then Condition 2 of
    Definition \ref{def:indistinct} guarantees the following:
    \begin{itemize}
    \item $(u, a, x) \in E(G)$ iff $(v, a, x) \in E(G)$, and
    \item $(x, a, u) \in E(G)$ iff $(x, a, v) \in E(G)$.
    \end{itemize}
  \end{itemize}
\end{proof}

The following definition and lemma offer a shorthand for
articulating arguments concerning indistinguishability.
\begin{definition}
  \label{def:adj}
  Given a digraph $G$ and vertices $u, v \in V(G)$, we use the term
  \Dfn{adjacency from $u$ to $v$}, denoted $\mathit{adj}_G(u, v)$, to
  refer to the set
  $\{\, +a \,\mid\, (u,a,v) \in E(G) \,\} \cup \{\, -a \,\mid\, (v, a,
  u) \in E(G) \,\}$. Vertices $u$ and $v$ need not be distinct, and we
  can drop the subscript $G$ when there is no ambiguities about which
  digraph we are concerned with.
\end{definition}
The lemma below follows immediately from Lemma \ref{lem:indistinct}.
\begin{lemma}
  \label{lem:indistinguishability-in-terms-of-adj}
  Given a digraph $G$ and vertices $x,y \in V(G)$, $x \equiv_G y$ iff
  $\mathit{adj}_G (x, z) = \mathit{adj}_G (y, z)$ for every
  $z \in V(G)$, iff $\mathit{adj}_G (z, x) = \mathit{adj}_G (z, y)$
  for every $z \in V(G)$.
\end{lemma}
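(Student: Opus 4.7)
The plan is to unfold Definition~\ref{def:adj} and match the resulting conditions to those already established in Lemma~\ref{lem:indistinct}. First, I would observe that for any $z \in V(G)$, the set $\mathit{adj}_G(x,z)$ decomposes, by definition, into a $+$-tagged part $\{\,+a \mid (x,a,z) \in E(G)\,\}$ and a $-$-tagged part $\{\,-a \mid (z,a,x) \in E(G)\,\}$, and similarly for $\mathit{adj}_G(y,z)$. Since the $+$- and $-$-tagged elements live in disjoint ``namespaces,'' the equality $\mathit{adj}_G(x,z) = \mathit{adj}_G(y,z)$ is equivalent, for each fixed $a \in \Sigma$, to the conjunction of the two biconditionals $(x,a,z) \in E(G) \iff (y,a,z) \in E(G)$ and $(z,a,x) \in E(G) \iff (z,a,y) \in E(G)$.

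Next, I would quantify over all $z \in V(G)$. The two biconditionals then become exactly the conditions \eqref{eqn:indistinct:left} and \eqref{eqn:indistinct:right} of Lemma~\ref{lem:indistinct} (with $u := x$ and $v := y$). Invoking that lemma in both directions gives the first ``iff'' of the statement: $x \equiv_G y$ if and only if $\mathit{adj}_G(x,z) = \mathit{adj}_G(y,z)$ for every $z \in V(G)$.

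For the second ``iff'', I would repeat the same unfolding but starting from $\mathit{adj}_G(z,x)$ and $\mathit{adj}_G(z,y)$. The $+$-tagged part of $\mathit{adj}_G(z,x)$ is now $\{\,+a \mid (z,a,x) \in E(G)\,\}$ and the $-$-tagged part is $\{\,-a \mid (x,a,z) \in E(G)\,\}$. In other words, swapping the two arguments of $\mathit{adj}_G$ merely swaps which biconditional is carried by the $+$-tagged coordinate and which is carried by the $-$-tagged one. The same two families of biconditionals reappear, so the condition ``$\mathit{adj}_G(z,x) = \mathit{adj}_G(z,y)$ for every $z$'' again reduces to \eqref{eqn:indistinct:left} together with \eqref{eqn:indistinct:right}, and Lemma~\ref{lem:indistinct} closes the loop.

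The only real ``obstacle'' is notational bookkeeping: keeping the signs straight when the arguments of $\mathit{adj}_G$ are swapped, and making explicit use of the disjointness of the $+$- and $-$-tagged components to separate the set-equality into the two independent biconditionals. There is no genuine mathematical content beyond Lemma~\ref{lem:indistinct} itself; the present lemma is a repackaging that trades the two-condition formulation for a single adjacency-set equality, which is more convenient for the later proofs.
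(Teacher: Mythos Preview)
Your proposal is correct and follows exactly the approach the paper intends: the paper simply states that the lemma ``follows immediately from Lemma~\ref{lem:indistinct},'' and your unfolding of Definition~\ref{def:adj} into the two biconditionals \eqref{eqn:indistinct:left} and \eqref{eqn:indistinct:right} is precisely that immediate derivation, spelled out in full.
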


The following lemma states that a strong homomorphism only maps
equivalent vertices to the same image. In other words, vertices
with distinct images are inequivalent.
\begin{lemma} \label{lem:sh_inverse} Suppose $\pi$ is a strong
  homomorphism from $G$ to $H$.  For $u,v \in V(G)$, if
  $\pi(u) = \pi(v)$ then $u \equiv_G v$.
\end{lemma}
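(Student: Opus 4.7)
The plan is to use the alternative characterization of indistinguishability from Lemma \ref{lem:indistinct}, which reduces $u \equiv_G v$ to verifying, for every $a \in \Sigma$ and every $x \in V(G)$, the two bi-implications $(u,a,x) \in E(G) \iff (v,a,x) \in E(G)$ and $(x,a,u) \in E(G) \iff (x,a,v) \in E(G)$. The strong homomorphism hypothesis essentially hands us these bi-implications for free once we know $\pi(u) = \pi(v)$.

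Concretely, I would fix an arbitrary $a \in \Sigma$ and $x \in V(G)$, then apply the strong homomorphism property of $\pi$ to rewrite each edge condition in $G$ as an edge condition in $H$: $(u,a,x) \in E(G) \iff (\pi(u), a, \pi(x)) \in E(H)$, and similarly $(v,a,x) \in E(G) \iff (\pi(v), a, \pi(x)) \in E(H)$. Since $\pi(u) = \pi(v)$, the two right-hand sides are literally the same triple in $E(H)$, so the two left-hand sides are equivalent. The symmetric argument, applied to $(x,a,u)$ and $(x,a,v)$, yields the second bi-implication. Combining both via Lemma \ref{lem:indistinct} gives $u \equiv_G v$.

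There is no substantive obstacle here; the lemma is essentially a direct unfolding of the definitions of strong homomorphism and indistinguishability, once Lemma \ref{lem:indistinct} is in place to spare us the case split between $x \in \{u,v\}$ and $x \notin \{u,v\}$ that Definition \ref{def:indistinct} would otherwise require. The only subtlety worth noting is that the argument does not require $u$, $v$, and $x$ to be distinct, which is precisely why invoking Lemma \ref{lem:indistinct} (rather than the original definition) keeps the proof clean.
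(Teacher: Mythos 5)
Your proof is correct and follows essentially the same route as the paper's: both reduce the claim to the characterization in Lemma \ref{lem:indistinct} and exploit the fact that $\pi(u)=\pi(v)$ makes the corresponding edge conditions in $H$ literally identical. The only difference is that the paper argues by contradiction (assuming $u \not\equiv_G v$ and case-splitting over the four failed biconditionals), whereas your direct chain of equivalences is a slightly cleaner presentation of the same idea.
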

\begin{proof}
  For brevity, we write $\equiv$ for $\equiv_G$.  Suppose $\pi$ is a
  strong homomorphism from $G$ to $H$.  Suppose further that
  $\pi(u) = \pi(v)$ for vertices $u, v \in V(G)$.  By way of
  contradiction, let us assume that $u \not\equiv v$.
  By Lemma \ref{lem:indistinct}, there exists $a \in \Sigma$ and
  $w \in V(G)$ such that one of the following four conditions holds:
  \begin{gather}
    (u, a, w) \in E(G) \land (v, a, w) \notin E(G) \label{eqn:pf:contra}\\
    (u, a, w) \notin E(G) \land (v, a, w) \in E(G) \\
    (w, a, u) \in E(G) \land (w, a, v) \notin E(G) \\
    (w, a, u) \notin E(G) \land (w, a, v) \in E(G)        
  \end{gather}
  We demonstrate below \eqref{eqn:pf:contra} leads to a
  contradiction. Similar arguments can be applied to show that the rest
  of the cases lead to contradictions as well.

  Suppose \eqref{eqn:pf:contra} holds. Then $(u, a, w) \in E(G)$
  implies $(\pi(u), a, \pi(w)) \in E(H)$. Similarly,
  $(v, a, w) \notin E(G)$ implies $(\pi(v), a, \pi(w)) \notin
  E(H)$. But then $\pi(u) = \pi(v)$, meaning the same edge
  $(\pi(u), a, \pi(w)) = (\pi(v), a, \pi(w))$ is both in $E(H)$ and
  not in $E(H)$. This is a contradiction.
\end{proof}

\subsection{Proof of Proposition \ref{prop:equiv_relation}}

\begin{proof}
  This proposition follows readily from either Lemma
  \ref{lem:indistinct} or Lemma
  \ref{lem:indistinguishability-in-terms-of-adj}.
\end{proof}

\subsection{Proof of Theorem \ref{thm:offline-correctness}}
\label{app:offline-correctness}

\begin{proof}
  Consider the function $f: V(G) \rightarrow V(\Sum{G})$ such that
  $f(v) = [v]_\equiv$. We demonstrate two facts: (a) $f$ is a
  strong homomorphism; (b) $G$ is not strongly homomorphic to any
  proper subgraph of \Sum{G}.

  \emph{(a) $f$ is a strong homomorphism.}  Consider vertices
  $u, v \in V(G)$ and access right $a \in \Sigma$.  Suppose
  $(u,a,v) \in E(G)$. Then by definition of \Sum{G},
  $(f(u), a, f(v)) = ([u]_\equiv, a, [v]_\equiv) \in
  E(\Sum{G})$. Conversely, suppose
  $([u]_\equiv, a, [v]_\equiv) \in E(\Sum{G})$ by virtue of
  $(u,a,v) \in E(G)$.
  Condition \eqref{eqn:indistinct:left} and
  \eqref{eqn:indistinct:right} of Lemma \ref{lem:indistinct}
  then imply
  that $(x,a,y) \in E(G)$ for every $x \in [u]_\equiv$ and
  $y \in [v]_\equiv$.
  
  \emph{(b) $G$ is not strongly homomorphic to any proper subgraph of
    \Sum{G}.}  By way of contradiction, suppose $G$ is strongly
  homomorphic to a proper subgraph $H$ of \Sum{G} via function
  $g : V(G) \rightarrow V(H)$.  Let $U$ contain exactly one vertex
  from each equivalence class induced by $\equiv$.  Let $f'$ be the
  restriction of $f$ to $U$. Function $f'$ is bijective. Since
  $V(H) \subset V(\Sum{G})$, the range of $g$ is a proper subset of
  the range of $f'$.  By the Pigeonhole Principle, there are two
  vertices $u,v \in U$, so that $g(u) = g(v)$ and $u \not\equiv v$,
  contradicting Lemma \ref{lem:sh_inverse}.
\end{proof}

\end{document}